\newtheorem{lemma}{Lemma}
\newtheorem{definition}{Definition}
\newtheorem{remark}{Remark}
\newcommand{\bra}[1]{\mbox{$\left\langle #1 \right|$}}
\newcommand{\ket}[1]{\mbox{$\left| #1 \right\rangle$}}
\newcommand{\braket}[2]{\mbox{$\left\langle #1 | #2 \right\rangle$}}
\def\tr{{\rm Tr}}
\def\IR{{\mathbb R}}
\def\IC{{\mathbb C}}
\def\sizeofU{{r}}
\newcommand{\munorm}[1]{\lVert #1 \rVert_{\vec{\mu}}}
\newcommand{\maxnorm}[1]{\lVert #1 \rVert_{\text{max}}}
\newcommand{\sumnorm}[1]{\lVert #1 \rVert_{\text{sum}}}
\def\KrausDim{{d}}
\def\ChannelDim{{n}}
\DeclareMathOperator{\myUgrp}{U}
\def\myrm{}
\def\dmathX#1#2{
$$\lineskiplimit=1000pt \advance\lineskip by #1\jot 
\mathsurround=0pt \tabskip=0pt plus 1000pt
\everycr{\noalign{\penalty\interdisplaylinepenalty}}
\halign to \displaywidth{
\hfil$\displaystyle{##}$\tabskip=0pt&%
\hfil $\displaystyle{{}##{}}$\hfil &%
\hfil $\displaystyle{{}##{}}$\hfil &%
$\displaystyle{##}$\hfil \tabskip=0pt plus 1000pt minus 1000pt&%
\refstepcounter{equation}\label{##}\llap{(\theequation)}\tabskip=0pt\cr
\noalign{\ifdim \prevdepth>-1000pt \vskip -#1\jot\fi}
#2\crcr}$$}
\begin{document}

\title
{Time-Energy Measure for Quantum Processes}

\author{Chi-Hang Fred Fung}
\email{chffung@hku.hk}
\affiliation{Department of Physics and Center of Theoretical and Computational Physics, University of Hong Kong, Pokfulam Road, Hong Kong}
\author{H.~F. Chau}
\affiliation{Department of Physics and Center of Theoretical and Computational Physics, University of Hong Kong, Pokfulam Road, Hong Kong}

\begin{abstract}
Quantum mechanics sets limits on how fast quantum processes can run given some system energy through time-energy uncertainty relations, and they imply that time and energy are tradeoff against each other.
Thus, we propose to measure the time-energy as a single unit for quantum channels.
We consider a time-energy measure for quantum channels and compute lower and upper bounds of it using the channel Kraus operators.
For a special class of channels (which includes the depolarizing channel), we can obtain the exact value of the time-energy measure.
One consequence of our result is that erasing quantum information requires 
$\sqrt{(n+1)/n}$ 
times more time-energy resource than erasing classical information, where $n$ is the system dimension.
\end{abstract}

\pacs{03.67.-a, 03.67.Lx, 89.70.Eg}

\maketitle

\section{Introduction}

Evolution of quantum processes (including performing quantum computation)
requires physical resources, in particular, time and energy.
The computation speed of a physical device is governed by physical laws and is limited by the energy of the device.
Under the constraints of quantum mechanics, system evolutions are bounded by
time-energy uncertainty relations (TEURs)~\cite{Lloyd2000}.
The investigation of TEURs has a long history.
The first major result of a TEUR was proved by Mandelstam and Tamm~\cite{Mandelstam1945}.
This was followed by 
subsequent work on isolated systems
~\cite{Bhattacharyya1983,Anandan1990,Uhlmann1992,Vaidman1992,Pfeifer1993,Margolus1996,Margolus1998,Chau2010}
and 
composite systems with entanglement \cite{Giovannetti2003,Giovannetti2003b,Zander2007}.
Recently, TEURs for general quantum processes have also been proved~\cite{Taddei2013,delCampo2013}.
The general form of TEURs is an inequality that sets a lower limit on 
the product of the system energy (or a function of the energies) and the time it takes to evolve an initial state to a final state (e.g., an orthogonal state).
Motivated by the TEURs and recognizing that time and energy are tradeoff against each other,
time-energy can be regarded as a single property
of a quantum process.
The intuition is that the more computation or work a quantum process performs, the more time-energy it requires.
And it is up to the system designer (or nature) to perform it
with more time but less energy, or vice versa.
Thus, our goal in this paper is to 
investigate the time-energy requirements of quantum processes by using a time-energy measure.
Chau~\cite{Chau2011} proposed a time-energy measure for unitary transformations that is based on a TEUR proved earlier~\cite{Chau2010}.
In this paper, we extend this measure 
to quantum processes.
The TEUR due to Chau~\cite{Chau2010} is tight 
in the sense that it can be saturated by some states and Hamiltonians, and thus it serves to motivate a good definition for a time-energy measure.
To see this, let's start with this TEUR.
Given a time-independent Hamiltonian $H$ of a system, 
the time 
$t$
needed to to evolve a state $\ket{\Phi}$ under the action of $H$ to a state 
whose fidelity~\footnote{We adopt the fidelity definition $F(\rho,\sigma)=\big(\tr \sqrt{\rho^{1/2} \sigma \rho^{1/2}}\big)^2$ for two quantum states $\rho$ and $\sigma$.} is less than or equal to $\epsilon$ satisfies the TEUR
\begin{align}
\label{eqn-time-energy-relation1}
t
\geq \frac{(1-\sqrt{\epsilon})\hbar}{A \sum_j |\alpha_j|^2 |E_j|}
\end{align}
where $E_j$'s are the eigenvalues of $H$ with the corresponding normalized energy eigenvectors $\ket{E_j}$'s, $\ket{\Phi}=\sum_j \alpha_j \ket{E_j}$, and $A \approx 0.725$ is a universal constant.
Essentially, 
after time 
$t$,
the state transforms unitarily according to $U=e^{-i Ht/\hbar}$.
The same $U$ could be implemented with either a high energy $H$ run for a shorter time or a low energy $H$ run for a longer time.
Based on Eq.~\eqref{eqn-time-energy-relation1}, a weighted sum of $|t E_j|$'s 
serves as
an indicator of the time-energy resource needed to perform $U$,
and as such the following time-energy measure on unitary matrices was proposed by Chau~\cite{Chau2011}:
\begin{align*}
\munorm{U}&=\sum_{j=1}^r \mu_j |\theta_j|^\downarrow
\end{align*}
where $U$ has eigenvalues $\exp(-i E_j t/\hbar)\equiv \exp(\theta_j)$ and $\vec{\mu}$ is some fixed vector to be described later (see Sec.~\ref{sec-problem-formulation}). 
In essence, a large value of $\munorm{U}$ suggests that a long time may be needed to run a Hamiltonian that implements $U$ for a fixed energy, and vice versa.

In this paper,
we are interested in 
an analogous measure
for quantum channels which include unitary transformations as special cases.
We are given a quantum channel $\mathcal{F}(\rho)$ acting on system $A$ that maps $n \times n$ density matrix $\rho$ to another one with the same dimension.
There exist unitary extensions $U_{BA}$ in a larger Hilbert space with an ancillary system $B$ such that 
$\mathcal{F}(\rho) = \tr_B [ U_{BA} (\ket{0}_B\bra{0} \otimes  \rho_A) U_{BA}^\dag ]$.
Each $U_{BA}$ could have a different time-energy spectrum and 
we want to select the one requiring the least resource 
for 
$\mathcal{F}$.
We extend the resource indicator for $U$ to quantum channel $\mathcal{F}$ by defining 
\dmathX2{
\munorm{\mathcal{F}} &\equiv& \min_U & \munorm{U}  \cr
&&
\text{s.t.} &
\mathcal{F}(\rho) = \tr_B [ U_{BA} (\ket{0}_B\bra{0} \otimes  \rho_A) U_{BA}^\dag ]
\: \forall \rho.
\cr
}
This gives a $U$ that consumes the least time-energy resource.
Thus, $\munorm{\mathcal{F}}$ is an indicator of the resource needed to perform $\mathcal{F}$.
We formally formulate 
this problem in Sec.~\ref{sec-problem-formulation} into the ``partial $U$ problem'' and the ``channel problem''.
Then, we simplify the ``partial $U$ problem'' in Sec.~\ref{sec-simplification} and solve the a special case of it in Sec.~\ref{sec-single-vector-transformation}.
The special case solution will be used to prove
our major results
which are 
the upper bound of 
the time-energy resource measure $\munorm{\mathcal{F}}$ 
(Sec.~\ref{sec-upper-bound}), the lower bound of $\munorm{\mathcal{F}}$ (Sec.~\ref{sec-lower-bound}), and the optimal 
$\munorm{\mathcal{F}}$ for a class of quantum channels (Sec.~\ref{sec-optimal-energy}).
The lower- and upper-bounds of 
the time-energy $\munorm{\mathcal{F}}$ hold
for any quantum channel $\mathcal{F}$ and for specific $\vec{\mu}$:
\begin{align}
\maxnorm{\mathcal{F}} 
&\geq
\min_{\mathbf{v}: \: \lVert \mathbf{v} \rVert \leq 1}
\:
\max_{1\leq i \leq n}
\cos^{-1} 
\Big[
\operatorname{Re}(\lambda_i(\sum_{j=1}^d v_j F_j)) 
\Big]
\\
\maxnorm{\mathcal{F}} 
&\leq 
\min_{\mathbf{v}: \: \lVert \mathbf{v} \rVert \leq 1}
\sum_{i=1}^n 
\cos^{-1}
\Big[
\operatorname{Re}
(\lambda_i(\sum_{j=1}^d v_j F_j))
\Big]
\\
\sumnorm{\mathcal{F}} &\geq 
\min_{\mathbf{v}: \: \lVert \mathbf{v} \rVert=1}
\:
\max_{1\leq i \leq n}
2 \cos^{-1} |\lambda_i(\sum_{j=1}^d v_j F_j)| 
\\
\sumnorm{\mathcal{F}} &\leq 
\min_{\mathbf{v}: \: \lVert \mathbf{v} \rVert \leq 1}
\sum_{i=1}^n 
2\cos^{-1}
\Big[
\operatorname{Re}
(\lambda_i(\sum_{j=1}^d v_j F_j))
\Big]
\end{align}
where $\lVert \mathbf{v} \rVert=\sqrt{\sum_{j=1}^d |v_j|^2}$,
$F_j \in \IC^{n\times n}, j=1,\dots,d$ are the Kraus operators of $\mathcal{F}$,
and
$\lambda_i(\cdot)$ denotes the $i$th eigenvalue of its argument.
Here, $\maxnorm{\cdot}$ is a short-hand notation for $\munorm{\cdot}$ with $\vec{\mu}=[1,0,0,\dots]$ and $\sumnorm{\cdot}$ for $\munorm{\cdot}$ with $\vec{\mu}=[1,1,1,\dots]$.
For a class of channels (which includes the depolarizing channel), we obtain the exact value for $\maxnorm{\mathcal{F}}$ in Sec.~\ref{sec-optimal-energy}. 
In particular, when $\mathcal{F}$ is a depolarizing channel with probability $q$ that the input state is unchanged, its time-energy requirement is $\maxnorm{\mathcal{F}}=\cos^{-1}\sqrt{ q + (1-q)/n^2 }$. 
Finally, in Sec.~\ref{sec-consequences}, we 
study the time-energy resource needed to erase information in both the quantum and classical settings.
We conclude that
$\sqrt{(n+1)/n}$ 
times more resource is required in the quantum setting than in the classical setting
and
that
the amount of 
time-energy 
resource needed for $k$ runs of the depolarizing channel scales as $\sqrt{k}$ when the noise is small.


\section{Problem formulation}
\label{sec-problem-formulation}

\subsection{Notations and assumptions}

We can describe $\mathcal{F}(\rho)$ using the Kraus operators: $\mathcal{F}(\rho)=\sum_{j=1}^d F_j \rho F_j^\dag$ where $F_j \in \IC^{n\times n}$ are the Kraus operators satisfying the trace-preserving condition $\sum_{j=1}^d F_j^\dag F_j = I$.
Note that any channel can be described by at most $n^2$ Kraus operators.
But here the formulation is general for any number of Kraus operators.

Denote by $\myUgrp(r)$ the group of $r \times r$ unitary matrices.

Decompose 
$U \in \myUgrp(r)$ 
into eigenvectors:
\begin{equation}
\label{eqn-U-eigen-decomposition}
U=\sum_{j=1}^r
\exp(-i \theta_j)
\ket{u_j}\bra{u_j}
\end{equation}
where $\theta_j=E_j t/\hbar$,
$E_j$ is the energy, and $t$ is the evolution time.
We call $\theta_j$'s {\em eigenangles}.
We assume that all angles are taken in the range $(-\pi,\pi]$.
Define a time-energy measure for $U$~\cite{Chau2011}
\begin{align}
\munorm{U}&=\sum_{j=1}^r \mu_j |\theta_j|^\downarrow
\end{align}
where $|\theta_j|^\downarrow$ denotes $|\theta_j|$ ordered non-increasingly
$|\theta_1|^\downarrow \geq |\theta_2|^\downarrow \geq \cdots \geq |\theta_r|^\downarrow$. Also,
$\vec{\mu}=[\mu_1,\mu_2,\dots,\mu_r]\neq \vec{0}$
with 
$\mu_1 \geq \mu_2 \geq \cdots \geq \mu_r \geq 0$.
Note that $\munorm{U}$ satisfies the multiplicative triangle inequality $\munorm{UV} \leq \munorm{U}+\munorm{V}$~\cite{Chau2011}.

We have two special cases for the time-energy measure:

\hfill\begin{tabular}{l@{\hspace{.2cm}}l@{\hspace{.3cm}}r}
\text{$\bullet$ Sum time-energy:} & $\lVert U \rVert_\text{sum} \equiv \displaystyle\sum_{j=1}^{r} |\theta_j|$ . 
& $(\refstepcounter{equation}\theequation)\label{eqn-def-sum-energy}$
\\
\text{$\bullet$ Max time-energy:} & 
$\lVert U \rVert_\text{max} \equiv \displaystyle\max_{1 \leq j \leq r} |\theta_j|=|\theta_1|^\downarrow$.
& $ (\refstepcounter{equation}\theequation)\label{eqn-def-max-energy}$
\end{tabular}

\medskip

\noindent Note that the subscript ``sum'' is short for $\vec{\mu}=[1,1,\dots,1]$ and ``max'' for $\vec{\mu}=[1,0,\dots,0]$.

Define $\lVert \mathbf{v} \rVert=\sqrt{\sum_{j=1}^d |v_j|^2}$ where $\mathbf{v}=[v_1,v_2,\dots,v_d]$.

We adopt the convention that $\cos^{-1}$ always returns an angle in the range $[0,\pi]$.

\subsection{The ``partial $U$ problem'' and the ``channel problem''}

We generalize the measure $\munorm{\cdot}$ to the case where part of a unitary matrix is given.
Suppose that we are given the first $n \leq r$ columns of $U$ 
denoted by 
$U_{[1,n]} \in \IC^{r \times n}$.
Define the ``partial $U$ problem'' for $U_{[1,n]}$:
\dmathX2{
\munorm{U_{[1,n]}} &\equiv& \displaystyle\min_{V} & \munorm{V}\cr
&&\text{s.t.}&V_{[1,n]}=U_{[1,n]}\text{ and}\cr
&&&V \in \myUgrp(r).&eqn-energy-measure-partial-U\cr
}
We use this as a bridge to generalize the measure to quantum channel
$\mathcal{F}(\rho_A)$
acting on density matrix $\rho_A \in \IC^{n\times n}$.
For such a channel
$\mathcal{F}(\rho_A)=\sum_{i=1}^{d} F_i \rho_A F_i^\dag$ where $F_i \in \IC^{n\times n}$,
there exist unitary extensions $U_{BA}$ in a larger Hilbert space with an ancillary system $B$ such that 
$\mathcal{F}(\rho_A) = \tr_B [ U_{BA} (\ket{0}_B\bra{0} \otimes  \rho_A) U_{BA}^\dag ]$.

Define a mapping from a sequence of Kraus operators $F_{1:\KrausDim} \triangleq (F_1,F_2,\ldots, F_\KrausDim)$ to an $\KrausDim \ChannelDim \times \ChannelDim$ matrix 
as follows:
\begin{align}
g(F_{1:\KrausDim})
&\triangleq
\begin{bmatrix}
F_1 
\\
F_2 
\\
\vdots
\\
F_{\KrausDim}
\end{bmatrix}.
\end{align}
Because $\sum_{j=1}^{\KrausDim} F_j^\dag F_j = I$, the columns of $g(F_{1:\KrausDim})$ are orthonormal and $g(F_{1:\KrausDim})$ can be regarded as a submatrix of a unitary one.
Thus, we can obtain $\munorm{g(F_{1:\KrausDim})}$ from problem~\eqref{eqn-energy-measure-partial-U}.

Note that two sets of Kraus operators $\{F_1,\ldots, F_d\}$ and $\{F_1',\ldots, F_d'\}$ represent the same quantum channel if and only if $F_i'=\sum_{j=1}^d w_{ij} F_j$ for all $i$ and for some unitary matrix $[w_{ij}]$ (see Ref.~\cite{Nielsen2000}).
If more Kraus operators are desired in one set, we can supplement the other set with all-zero Kraus operators.
This implies that given one Kraus representation $\{F_1,\ldots, F_d\}$,
the most general form of all unitary extension implementing $\mathcal{F}$ is
\begin{align}
\label{eqn-U-general-form}
U_{BA}=(W_B \otimes I_A) 
\underbrace{
\begin{bmatrix}
F_1 & * & * & \cdots & *
\\
F_2 & * & * & \cdots & *
\\
\vdots & & & & \vdots
\\
F_{d} & * & * & \cdots & *
\\
\mathbf{0} & * & * & \cdots & *
\\
\vdots & & & & \vdots
\\
\mathbf{0} & * & * & \cdots & *
\end{bmatrix}
}_{\displaystyle\tilde{U}_{BA}}
\end{align}
where $W_B$ is any unitary of dimension $d' \times d'$, $I_A$ is the identity matrix of dimension $n \times n$, $\tilde{U}_{BA}$ is any unitary of dimension $nd' \times nd'$ with the first $n$ columns fixed as shown, and $\mathbf{0}$ is the all-zero matrix of dimension $n \times n$.
Here, we allow $d'$ to be in the range $d \leq d' < \infty$.

We define the time-energy measure of the quantum channel $\mathcal{F}$ given a Kraus representation $\{F_1,\ldots, F_d\}$ as follows:
\dmathX2{
\munorm{\mathcal{F}} &\equiv& \min_{W_B,d'} & \munorm{ (W_B \otimes I_A) \tilde{U}_{BA}(1:n) }\cr
&&\text{s.t.}&W_B \in \myUgrp(d')\text{ and }d \leq d' < \infty,&eqn-energy-measure-channel\cr
}
where we make use of Eq.~\eqref{eqn-energy-measure-partial-U} in the objective function and
$\tilde{U}_{BA}(1:n)$ are the first $n$ columns of $\tilde{U}_{BA}$ given in Eq.~\eqref{eqn-U-general-form}.
We call this the ``channel problem''.

Our ultimate goal is to find the minimum time-energy required to implement a quantum channel by solving the ``channel problem''~\eqref{eqn-energy-measure-channel}.
We first consider the ``partial $U$ problem''~\eqref{eqn-energy-measure-partial-U}.

\section{Simplification of the ``partial $U$ problem''}
\label{sec-simplification}

The ``partial $U$ problem''~(\ref{eqn-energy-measure-partial-U}) can be recast as that given $U$ of dimension $r \times r$ of the form
\begin{align}
\label{eqn-U-with-missing-columns}
U=
\begin{bmatrix}
\ket{b_1} & \ket{b_2} & \dots & \ket{b_n} 
& * & * & \cdots & *
\end{bmatrix}
\end{align}
where the first $n$ columns, labeled as $\ket{b_i}, i=1,\dots,n$, are fixed,
our goal is to find the remaining $r-n$ columns to 
minimize 
$\munorm{U}$
while maintaining $U$ unitary.

It is helpful to consider $U$ as a mapping with the requirement that it performs the following transformations:
\begin{align}
\label{eqn-overall-required-transformation}
\ket{e_i} \longrightarrow \ket{b_i}  \:\: \text{for all }i=1,\ldots,n
\end{align}
where $\ket{e_i}$ is the unit vector with $1$ at the $i$th entry and $0$ everywhere else.
Then, the ``partial $U$ problem''~(\ref{eqn-energy-measure-partial-U}) for 
$U_{[1,n]}$ is
equivalent to
\dmathX2{
\munorm{U_{[1,n]}} &=& \displaystyle\min_{U} & 
\munorm{ U }\cr
&&\text{s.t.}&U\ket{e_i} = \ket{b_i}  \:\: \text{for all }i=1,\ldots,n,\cr
&&&\text{with }U \in \myUgrp(r).&eqn-problem-original-min-U\cr
}
Note that $\{ \ket{b_i}: i=1,\dots,n\}$ is an orthonormal set due to the trace-preserving property of quantum channels.

\begin{lemma}
\label{lemma-U-norm-unchanged-by-conjugation}
{\myrm
$\munorm{ g(F_1,F_2,\dots,F_{\KrausDim}) }=
\munorm{ g(QF_1Q^\dag,F_2Q^\dag,\dots,F_{\KrausDim}Q^\dag) }
$
for any unitary matrix $Q$.
}
\end{lemma}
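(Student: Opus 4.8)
The plan is to realise the passage from one stacked Kraus matrix to the other as a unitary conjugation of the competing completions, and then to invoke the fact that $\munorm{\cdot}$ on a full unitary depends on nothing but its eigenangles. Write $G=g(F_1,\ldots,F_d)$ and $G'=g(QF_1Q^\dag,F_2Q^\dag,\ldots,F_dQ^\dag)$, both of which are $dn\times n$ matrices with orthonormal columns. A direct block computation first expresses $G'$ in terms of $G$,
$$
G'=L\,G\,Q^\dag,
\qquad
L\triangleq
\begin{bmatrix}
Q & \mathbf{0}\\
\mathbf{0} & I
\end{bmatrix}\in\myUgrp(dn),
$$
where $L$ is block diagonal, acting as $Q$ on the first $n\times n$ block and as the identity on the remaining $(d-1)n$ coordinates. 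The content of this identity is that the right factor $Q^\dag$ supplies the uniform right-multiplication $F_j\mapsto F_jQ^\dag$ on every block, while the left factor $L$ adds the extra left-multiplication by $Q$ on the top block alone, turning $F_1Q^\dag$ into $QF_1Q^\dag$.

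Next I would use that for a full unitary $V$ the measure $\munorm{V}$ is a symmetric function of its eigenangles $\theta_j$ from Eq.~\eqref{eqn-U-eigen-decomposition}, hence invariant under conjugation: $\munorm{WVW^\dag}=\munorm{V}$ for every $W\in\myUgrp(dn)$. I would then verify that $V\mapsto LVL^\dag$ carries completions of $G$ to completions of $G'$. For $i\le n$ one has $L^\dag\ket{e_i}=Q^\dag\ket{e_i}$, so if the first $n$ columns of $V$ form $G$ then
$$
LVL^\dag\ket{e_i}=L\,V\,(Q^\dag\ket{e_i})=L\,G\,Q^\dag\ket{e_i}=G'\ket{e_i},
$$
i.e.\ the first $n$ columns of $LVL^\dag$ are exactly $G'$. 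Since this map is a conjugation it preserves $\munorm{\cdot}$, and since it is a bijection of $\myUgrp(dn)$ with inverse $V\mapsto L^\dag V L$, it matches the feasible $V$ in the partial-$U$ problem~\eqref{eqn-energy-measure-partial-U} for $G$ one-to-one with those for $G'$ while preserving the objective. Taking the minimum over completions on each side gives $\munorm{G}=\munorm{G'}$, which is the claim.

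The step I expect to require the most care is the block identity $G'=LGQ^\dag$ together with the choice $W=L$: a single conjugating factor must simultaneously account for the \emph{asymmetric} conjugation $QF_1Q^\dag$ on the first block and the \emph{uniform} right-multiplication by $Q^\dag$ on all blocks. It is precisely the structure of $L$ (namely $Q$ on the top block, identity elsewhere) that makes this possible, and it is why the same $L$ can be used on both the left and the right to produce a genuine conjugation rather than an arbitrary two-sided product, which in general would not preserve eigenangles. The remaining verifications---unitarity of $LVL^\dag$, bijectivity of the conjugation map, and the column bookkeeping above---are routine.
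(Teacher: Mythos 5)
Your proof is correct and takes essentially the same route as the paper's: your $L$ is exactly the paper's block-diagonal conjugating unitary $\tilde{Q}$, your identity $G'=L\,G\,Q^\dag$ is the paper's $G_2=\tilde{Q}G_1Q^\dag$, and both arguments conclude by observing that conjugation by this unitary is a $\munorm{\cdot}$-preserving bijection between the feasible sets of the two partial-$U$ problems~\eqref{eqn-energy-measure-partial-U}.
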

\begin{proof}
Let $G_1=g(F_1,F_2,\dots,F_{\KrausDim})$ and $G_2=g(QF_1Q^\dag,F_2Q^\dag,\dots,F_{\KrausDim}Q^\dag)$.
First note that
$
G_2=
\tilde{Q}
G_1
Q^\dag
$, where
$
\tilde{Q}=
\begin{bmatrix}
Q & 0
\\
0 & I
\end{bmatrix}
$.
Problem~(\ref{eqn-energy-measure-partial-U}) for $G_2$ is
\dmathX2{
\munorm{G_2} &\equiv& \min_{V} & \munorm{V}\cr
&&\text{s.t.}&V_{[1,n]}=G_2 \text{ and}\cr
&&&V \text{ is unitary}.
}
Pre- and post-multiplication on the constraint gives
\dmathX2{
\munorm{G_2} &\equiv& \min_{V} & \munorm{V}\cr
&&\text{s.t.}&\tilde{Q}^\dag V_{[1,n]} Q =
\tilde{Q}^\dag G_2 Q \text{ and}\cr
&&&V \text{ is unitary}.
}
Further simplification on the constraint gives
\dmathX2{
\munorm{G_2} &\equiv& \min_{V} & \munorm{\tilde{Q}^\dag V \tilde{Q}}\cr
&&\text{s.t.}&(\tilde{Q}^\dag V \tilde{Q})_{[1,n]}  =
G_1 \text{ and}\cr
&&&\tilde{Q}^\dag V \tilde{Q} \text{ is unitary}.
}
Here, we used the fact that $\munorm{V}=\munorm{\tilde{Q}^\dag V \tilde{Q}}$ for any unitary $\tilde{Q}$ since the eigenvalues are preserved under the conjugation by $\tilde{Q}$.
Finally, noting that minimizing over $V$ is the same as minimizing over $\tilde{Q}^\dag V \tilde{Q}$, the claim that $\munorm{G_1}=\munorm{G_2}$ is proved.
\end{proof}

\begin{remark}
\label{remark-triangular-F1}
{\myrm
(Triangularization of $F_1$)
According to Lemma~\ref{lemma-U-norm-unchanged-by-conjugation}, 
$\munorm{ g(QF_1Q^\dag,F_2Q^\dag,\dots,F_{\KrausDim}Q^\dag) }$ is invariant to unitary $Q$.
Thus, we may choose any $Q$ so that the Kraus operators are in a form that we desire.
In particular, we may choose $Q$ to be the unitary matrix of
the Schur decomposition of $F_1$.
(Note that the Schur decomposition is applicable to any matrix.)
This makes $QF_1Q^\dag$ upper triangular with the eigenvalues of $F_1$ on the diagonal.
}
\end{remark}

\section{``Partial $U$ problem'' with one vector}
\label{sec-single-vector-transformation}

We solve the ``partial $U$ problem''~\eqref{eqn-problem-original-min-U}
for the special case of $n=1$.
This case turns out to be useful in computing the upper bound of the time-energy $\munorm{\mathcal{F}}$ (Sec.~\ref{sec-upper-bound}), the lower bound of $\munorm{\mathcal{F}}$ (Sec.~\ref{sec-lower-bound}), and the optimal 
$\munorm{\mathcal{F}}$ for a class of quantum channels (Sec.~\ref{sec-optimal-energy}).

\subsection{Optimal single-vector transformation: general form}
\label{sec-optimal-single-vector-transformation}

We consider the optimal 
$U \in \myUgrp(\sizeofU)$ $(\sizeofU \geq 2)$ for this problem:
\dmathX2{
P_{\vec{\mu}}(\ket{a},\ket{b}) &\equiv& \min_{U} & \munorm{ U }\cr
&&\text{s.t.}&U\ket{a} = \ket{b}\text{ with}\cr
&&&U \in \myUgrp(r)&eqn-problem-min-U-single-vector\cr
}
where $\ket{a}$ and $\ket{b}$ are general normalized vectors of length $\sizeofU$.
We first show that the optimal $U$ can be achieved with two non-zero eigenangles for any $\vec{\mu}$.
Then we show how to construct $U$ given two eigenangles, find 
$\maxnorm{U}$, and bound $\sumnorm{U}$.

Note that the solution is of the form
\begin{align}
\label{eqn-problem-min-U-single-vector-solution-form}
P_{\vec{\mu}}(\ket{a},\ket{b})=
f_{\vec{\mu}}(\braket{a}{b}).
\end{align}
This is because 
$\munorm{ U }=\munorm{ V^\dag U V}$ for any unitary $V$ and thus
$P_{\vec{\mu}}(\ket{a},\ket{b})=
P_{\vec{\mu}}(V\ket{a},V\ket{b})$.

In the following, we assume $\ket{a} \neq \ket{b}$.
The case $\ket{a} = \ket{b}$ is trivial since 
$P_{\vec{\mu}}(\ket{a},\ket{a})=0$ with $U=I$.

\subsubsection{Optimal $U$ operates non-trivially on a two-dimensional subspace}

Consider the constraint $U\ket{a} = \ket{b}$.
We have
\begin{align}
\label{eqn-single-U-linear-combination}
\sum_{j=1}^{\sizeofU} \exp(i \theta_{j}) 
\lvert \braket{u_{j}}{a} \rvert^2
=\braket{a}{b}
,
\end{align}
where $U=\sum_{j=1}^{\sizeofU} \exp(i \theta_{j}) \ket{u_{j}}\bra{u_{j}}$ is the eigen-decomposition of $U$.
Thus, the point $\braket{a}{b}$ 
is a linear combination of the vertices $\exp(i \theta_{j})$ on the unit circle with weights $\lvert \braket{u_{j}}{a} \rvert^2$.
We show that the optimal $U$ for the time-energy measure $\munorm{U}$
can always be achieved with a linear combination of two vertices.

\begin{figure}
\begin{center}
\includegraphics[width=.5\columnwidth]{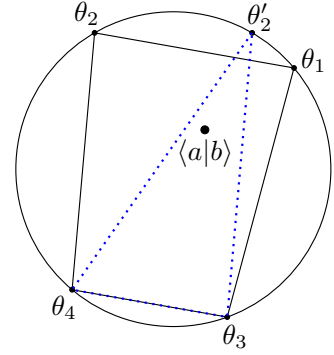}
\caption{
\label{fig:polygon}
The point $\braket{a}{b}$ is initially obtained as a linear combination of four points at eigenangles $\theta_j, j=1,\dots,4$ of $U$ based on Eq.~\eqref{eqn-single-U-linear-combination}.
A new eigenangle $\theta_2'$ can be found such that $\braket{a}{b}$ is a linear combination of $\theta_2'$, $\theta_3$, and $\theta_4$.
}
\end{center}
\end{figure}

\begin{lemma}
\label{lemma-Ui-two-terms}
{\myrm
$U$ with the minimal $\munorm{U}$
such that Eq.~\eqref{eqn-single-U-linear-combination} is satisfied
can always be achieved 
with two non-zero eigenangles.
}
\end{lemma}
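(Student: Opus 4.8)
The plan is to turn Eq.~\eqref{eqn-single-U-linear-combination} into a statement in planar convex geometry and then exploit that all the relevant points sit on the unit circle. Writing $c\equiv\braket{a}{b}$ and $w_j\equiv\lvert\braket{u_j}{a}\rvert^2\ge 0$, Eq.~\eqref{eqn-single-U-linear-combination} says precisely that $c=\sum_j w_j\exp(i\theta_j)$ with $\sum_j w_j=1$, i.e.\ that $c$ is a convex combination of the circle points $\exp(i\theta_j)$. This correspondence is an equivalence: given any probability vector $(w_j)$ and angles $(\theta_j)$, one can pick an orthonormal basis $\{\ket{u_j}\}$ with $\lvert\braket{u_j}{a}\rvert^2=w_j$ (possible since $\sum_j w_j=1=\braket{a}{a}$) and set $U=\sum_j\exp(i\theta_j)\ket{u_j}\bra{u_j}\in\myUgrp(\sizeofU)$, filling any unused dimensions with the eigenvalue $1$. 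Since the eigenangle $0$ contributes nothing to $\munorm{U}$, minimizing $\munorm{U}$ subject to Eq.~\eqref{eqn-single-U-linear-combination} is equivalent to minimizing $\sum_j\mu_j\lvert\theta_j\rvert^\downarrow$ over all convex representations of $c$, with the point $\exp(i\cdot 0)=1$ always available ``for free''.

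Next I would record the only property of the cost that is needed: since $\mu_1\ge\mu_2\ge\cdots\ge\mu_r\ge 0$, the functional $\sum_j\mu_j\lvert\theta_j\rvert^\downarrow$ never increases when one angle is deleted from the support, because deleting an entry can only lower each sorted value $\lvert\theta\rvert^\downarrow_{(k)}$ while the weights $\mu_k$ are nonnegative. Thus any reduction in the number of nonzero eigenangles that keeps a \emph{subset} of the original angles is automatically cost-nonincreasing.

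The heart of the argument is to reduce the support to at most two nonzero angles. By Carath\'eodory's theorem in $\IR^2\cong\IC$, an optimal representation $c=\sum_j w_j\exp(i\theta_j)$ may be taken to use at most three of the circle points. If at most two of these carry a nonzero angle we are done; otherwise $c$ lies in the triangle $\triangle(\exp(i\theta_1),\exp(i\theta_2),\exp(i\theta_3))$ with $\theta_1,\theta_2,\theta_3$ all nonzero. I would then adjoin the free point $1$: the four \emph{distinct} points $1,\exp(i\theta_1),\exp(i\theta_2),\exp(i\theta_3)$ lie on the unit circle, hence are in convex position and bound a convex quadrilateral $Q\supseteq\triangle(\exp(i\theta_1),\exp(i\theta_2),\exp(i\theta_3))\ni c$. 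Triangulating $Q$ by the two diagonals from the vertex $1$ places $c$ in a sub-triangle $\triangle(1,\exp(i\theta_a),\exp(i\theta_b))$ with $\{a,b\}\subset\{1,2,3\}$, so $c$ is a convex combination of $1$ and just two of the original circle points. The corresponding $U$ has only the two nonzero eigenangles $\theta_a,\theta_b$, with the leftover weight sitting at eigenangle $0$, and since $\{\theta_a,\theta_b\}$ is a subset of the original angles the cost did not increase. This is exactly the move sketched in Fig.~\ref{fig:polygon}. Attainment is then routine: the set of $U$ with at most two nonzero eigenangles is compact and $\munorm{\cdot}$ continuous, so the minimum is attained there and, by the reduction, equals the global minimum. (For $\sizeofU=2$ there is nothing to prove, since $U$ then has exactly two eigenangles.)

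I expect the main obstacle to be making this reduction rigorous while guaranteeing that the cost cannot grow. The decisive observation is that every candidate point, including the free point $1$, lies on the unit circle and is therefore automatically in convex position; this is what lets $Q$ be triangulated from the vertex $1$ and forces $c$ into a triangle spanned by $1$ together with a \emph{subset} of the original angles. Keeping original angles, rather than synthesizing a new one, is precisely what makes the monotonicity above applicable. It also sidesteps the pitfall of a naive ``merge two points into one'' step, which for angles of opposite sign can average to a point whose chord with $1$ exits the circle near eigenangle $\pm\pi$ and would thereby inflate the cost.
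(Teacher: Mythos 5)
Your proof is correct, but it reaches the two-angle conclusion by a genuinely different mechanism than the paper's. The paper works iteratively: among the $m>2$ weighted vertices it picks two \emph{adjacent} vertices of the \emph{same sign}, replaces the edge between them by the corresponding arc of the unit circle, and thereby merges them into a single \emph{new} eigenangle $\theta_2'$ lying between them; the same-sign choice is what guarantees $|\theta_2'|\leq\max(|\theta_1|,|\theta_2|)$, and a rearrangement-type comparison of the sorted angle lists then shows the cost cannot increase (this is exactly the content of Fig.~\ref{fig:polygon}). You never synthesize a new angle: Carath\'eodory in the plane cuts the support of Eq.~\eqref{eqn-single-U-linear-combination} down to at most three of the \emph{original} circle points, and then, rather than merging, you adjoin the cost-free point $1$ and fan-triangulate the cyclic quadrilateral on $\{1,e^{i\theta_1},e^{i\theta_2},e^{i\theta_3}\}$ from the vertex $1$, landing $\braket{a}{b}$ in a triangle spanned by $1$ and two of the original points. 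Because only subsets of the original angles ever appear, your cost comparison reduces to the trivial monotonicity of $\sum_j\mu_j|\theta_j|^\downarrow$ under deletion of a support angle, which is cleaner than the paper's rearrangement step and automatically sidesteps the opposite-sign pitfall that forces the paper's careful adjacency choice; you also make attainment of the minimum explicit (compactness of the feasible set, continuity of $\munorm{\cdot}$), which the paper leaves implicit. Two cosmetic repairs: a convex quadrilateral has only \emph{one} diagonal emanating from the vertex $1$ (the fan from $1$ consists of the two triangles cut by that single diagonal), and you should dispose of the degenerate case where two of the three Carath\'eodory points coincide (then the support already has at most two distinct angles and one eigenvector per distinct angle suffices); neither affects the substance of the argument.
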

\begin{proof}
Suppose the number of $\theta_{j}$ with non-zero weights (i.e., $\braket{u_{j}}{a} \neq 0$) is $m>2$.
From these $m$ vertices, pick two adjacent vertices that are either both positive or both negative.
This can always be done since $m>2$.
Without loss of generality (wlog), denote these two vertices as
 $\exp(i \theta_{1})$ and $\exp(i \theta_{2})$, and the remaining vertices as
$\exp(i \theta_{3})$ to $\exp(i \theta_{m})$.
Since $\sum_{j=1}^m \lvert \braket{u_{j}}{a} \rvert^2 =1$, the point $\braket{a}{b}$ lies inside the polygon defined by the vertices $\exp(i \theta_{j}), j=1,\dots,m$, according to Eq.~\eqref{eqn-single-U-linear-combination}.
If we replace the edge connecting $\exp(i \theta_{1})$ and $\exp(i \theta_{2})$ by their arc on the unit circle, the resultant shape will be strictly larger and contain the original polygon.
This new shape can be expressed as 
$$
\bigcup_{\theta'_2 \in [\theta_{1},\theta_{2}]}
\text{Polygon}(\theta'_2,\theta_{3},\ldots,\theta_{m})
$$
where $\text{Polygon}$ denotes the polygon defined by the vertices given in the arguments.
Here, we assume $\theta_{1}\leq\theta_{2}$ wlog.
Since the point $\braket{a}{b}$ lies inside this shape, it must also lie inside one of the polygons each defined with $m-1$ vertices.
Therefore, the point can be obtained as a linear combination of 
$m-1$ vertices 
(see Fig.~\ref{fig:polygon}).
Essentially, we replace $\theta_{1}$ and $\theta_{2}$ by some $\theta'_2$ defining the relevant polygon.
It remains to verify that 
the time-energy measure
using these $m-1$ vertices is no larger than before.
Denote by $P(j), j=1,\ldots,\sizeofU$, the decreasing order of $|\theta_j|$ where $\{\theta_{m+1},\ldots,\theta_\sizeofU\}$ are the eigenangles of $U$ with zero weights (i.e., $\braket{u_{j}}{a} = 0$).
Denote by $P'(j), j=1,\ldots,\sizeofU$, the decreasing order of $|\theta'_j|$ where 
$\theta'_j=\theta_j$ for $j=3,\ldots,m$ and $\theta'_j=0$ for $j=1,m+1,\ldots,\sizeofU$.
($\theta_2'$ is defined above.)
We have
\begin{align*}
\munorm{U}
=&
\sum_{j=1}^\sizeofU \mu_{P(j)} |\theta_j|
\\
\geq&
\sum_{j=3}^m \mu_{P'(j)} |\theta'_j|+ \mu_{P'(2)} \max(|\theta_2|,|\theta_1|)+ 
\\
&\mu_{P'(1)} \min(|\theta_2|,|\theta_1|)+\sum_{j=m+1}^\sizeofU \mu_{P'(j)} |\theta_j|
\\
\geq&
\sum_{j=3}^m \mu_{P'(j)} |\theta'_j|+\mu_{P'(2)} |\theta'_2|,
\end{align*}
where the second last line is due to that 
$\sum_{j=1}^\sizeofU \mu_{P(j)} |\theta_j| \geq \sum_{j=1}^\sizeofU \mu_{P''(j)} |\theta_j|$ for any ordering $P''$,
and the last line is due to that $\theta'_2 \in [\theta_{1},\theta_{2}]$.
In summary, a new $U'$ can be formed using these $m-1$ eigenangles $\{\theta'_2,\ldots,\theta'_m\}$ with  $\munorm{U'} \leq \munorm{U}$.
We can repeat this argument for removing another vertex until we reach $m=2$.
This proves that 
the optimal $U$ for the time-energy measure can always be achieved with a linear combination of two vertices or, in other words, two non-zero eigenangles.
\end{proof}

This lemma implies that when finding an optimal $U$ with respect to $\munorm{U}$,
it is sufficient to consider all chords (i.e., two-vertex polygons) on the unit circle passing through the desired point $\braket{a}{b}$.
Each chord defines two eigenangles, $\theta_1$ and $\theta_2$, which
in turn define a 
unitary transformation from $\ket{a}$ to $\ket{b}$.
This transformation $\tilde{U}$ acts on the subspace spanned by $\ket{a}$ and $\ket{b}$:
\begin{align}
\tilde{U}
&=
\tilde{u}_{1} \ket{a_i}\bra{a_i} +\tilde{u}_{2} \ket{a_i}\bra{a_i^\perp} 
+\tilde{u}_{3}\ket{a_i^\perp}\bra{a_i} + \tilde{u}_{4} \ket{a_i^\perp}\bra{a_i^\perp} 
\nonumber
\\
&=
\begin{bmatrix}
\tilde{u}_{1} & \tilde{u}_{2}
\\
\tilde{u}_{3} & \tilde{u}_{4}
\end{bmatrix}
\label{eqn-tilde-Ui}
\end{align}
expressed in the basis $\{\ket{a},\ket{a^\perp}\}$.
Here,
\begin{align}
\label{eqn-a-perp}
\ket{a^\perp}
&=
\frac{1}{\sqrt{1-\lvert \braket{a}{b} \rvert^2}}
\Big(\ket{b} - \braket{a}{b} \ket{a}\Big)
\end{align}
is a vector orthogonal to $\ket{a}$ in the plane spanned by $\ket{a}$ and $\ket{b}$.
(We assume $\ket{a} \neq \ket{b}$.)
The entries of $\tilde{U}$ can be found by imposing that the eigenvalues are $\exp(i\theta_1)$ and $\exp(i\theta_2)$ and $\tilde{U}\ket{a}=\ket{b}$ (see Appendix~\ref{app-elements-of-U} for detail):
\begin{align}
\label{eqn-U-with-actual-elements}
\tilde{U}=
\begin{bmatrix}
\braket{a}{b} & -e^{i (\theta_1+\theta_2)} \sqrt{1-|\braket{a}{b}|^2}
\\
\sqrt{1-|\braket{a}{b}|^2} & e^{i (\theta_1+\theta_2)} \braket{b}{a}
\end{bmatrix}.
\end{align}

The overall transformation is composed of the transformation $\tilde{U}$ in the subspace spanned by $\ket{a}$ and $\ket{b}$ and a transformation $\tilde{U}^\perp$ in the orthogonal subspace:
\begin{align}
\label{eqn-Ui-breakdown}
U
=
\tilde{U}
+
\tilde{U}^\perp.
\end{align}
We assign $\tilde{U}^\perp$ with zero eigenangles:
\begin{align}
\label{eqn-tilde-Uiperp}
\tilde{U}^\perp=
I-\ket{a}\bra{a}-\ket{a^\perp}\bra{a^\perp} .
\end{align}
This ensures that $\munorm{U}$ is minimized.
Thus, 
$
\munorm{U}=\munorm{\tilde{U}}
=\mu_{1} |\theta_i|+\mu_{2} |\theta_{3-i}|
$ where $i=\arg \max_{j=1,2} |\theta_j|$.
Note that the eigenvectors corresponding to $\tilde{U}^\perp$ in
Eq.~\eqref{eqn-single-U-linear-combination} have weights $\lvert \braket{u_{j}}{a} \rvert^2=0$.

\subsection{Optimal max time-energy}

\begin{figure}
\begin{center}
\includegraphics[width=.5\columnwidth]{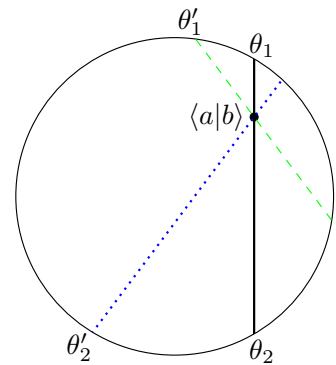}
\caption{
\label{fig:optimal-max-time-energy}
The optimal $U$ for the max time-energy consists of two non-trivial eigenangles $\theta_1$ and $\theta_2$ which form a vertical line passing through the point $\braket{a}{b}$.
Two other rotated lines about the point are shown.
The dashed line (green) and the dotted line (blue) have max time-energy $\theta_1'$ and $|\theta_2'|$ respectively, which are both greater than the optimal value $\theta_1$.
}
\end{center}
\end{figure}

For the max time-energy,
we show that the optimal $U$ of problem~\eqref{eqn-problem-min-U-single-vector} has two non-trivial eigenangles $\theta_1,\theta_2=\pm\cos^{-1}(\operatorname{Re}(\braket{a}{b}))$, where $\cos^{-1}$ always returns an angle in the range $[0,\pi]$,
and $U$ has the form of Eq.~\eqref{eqn-Ui-breakdown}.
The linear combination of these two eigenangles corresponds to a vertical line passing through the point $\braket{a}{b}$ (see Fig.~\ref{fig:optimal-max-time-energy}).
It can easily be seen that this line gives the minimal max time-energy.
Consider a line obtained by rotating the vertical line about $\braket{a}{b}$.
If $\braket{a}{b}$ is strictly inside the unit circle, then one of the two eigenangles must become larger in magnitude, giving rise to a larger max time-energy 
of
$\max(|\theta_1|,|\theta_2|)$.
If $\braket{a}{b}$ is on the unit circle, then one of the two eigenangles remains unchanged and so the max time-energy cannot become smaller.
Therefore, we have
\begin{align}
\label{eqn-max-norm-single-vector-solution}
P_{\text{max}}(\ket{a},\ket{b})
=
f_\text{max}(\braket{a}{b})
=
\cos^{-1}(\operatorname{Re}(\braket{a}{b})).
\end{align}

\subsection{Sum time-energy}

We only derive lower and upper bounds on the sum time-energy for problem~\eqref{eqn-problem-min-U-single-vector}.
\begin{lemma}
\label{lemma-minimum-angle-triangle}
{\myrm
For each chord that passes through the point $r \exp(i \gamma )$, we associate a triangle formed by the origin and the chord.
Among all such chords, the minimum angle of the triangle at the origin is $2\beta$ 
where $r = \cos(\beta)$.
}
\end{lemma}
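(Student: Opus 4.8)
The plan is to argue purely geometrically on the unit circle, exploiting the fact that the angle a chord subtends at the centre depends only on the chord's perpendicular distance to the origin. First I would invoke rotational symmetry: rotating the whole configuration by $-\gamma$ carries the fixed point $r\exp(i\gamma)$ to $P=(r,0)$ on the positive real axis, maps chords to chords, and preserves every triangle angle at the origin $O$. Hence it suffices to treat $\gamma=0$, and throughout we have $r=\cos\beta$ with $\beta=\cos^{-1}(r)\in[0,\pi/2]$ since $P$ lies inside or on the circle.

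Next I would record the elementary relation between a chord and its distance to the centre. Let $h$ denote the perpendicular distance from $O$ to a given chord and let $F$ be the foot of that perpendicular. The two endpoints then lie at $F\pm\sqrt{1-h^2}\,\hat{t}$ for a unit vector $\hat{t}$ along the chord, and a short computation (placing $F$ momentarily along a radius) shows that their angular positions on the circle differ by exactly $2\cos^{-1}(h)$. Therefore the angle of the associated triangle at $O$ equals $2\cos^{-1}(h)$, which is strictly decreasing in $h$. Minimising the central angle over all admissible chords is thus \emph{equivalent} to maximising $h$ over all chords through $P$.

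The one substantive step that remains is to maximise $h$. This reduces to the standard fact that the distance from $O$ to any line through the fixed point $P$ is at most $|OP|=r$, with equality precisely when the line is perpendicular to the radius $OP$ (in which case the foot of the perpendicular coincides with $P$). Substituting the maximiser $h=r=\cos\beta$ into the angle formula gives the minimum triangle angle $2\cos^{-1}(\cos\beta)=2\beta$, which is the claim.

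I do not expect a genuine obstacle here; the only points requiring care are bookkeeping ones. Concretely, one should check that as the chord is rotated about $P$ its perpendicular distance $h$ genuinely decreases away from the perpendicular configuration (so that $h=r$ is the unique maximiser rather than a saddle), and that the degenerate boundary case $r=1$ (forcing $\beta=0$, a tangent/point chord) is consistent with the formula in the limit. Both follow immediately from $h\le r$ and the monotonicity of $2\cos^{-1}(\cdot)$, so the argument closes without further analysis.
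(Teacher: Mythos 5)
Your proof is correct, and it takes a genuinely different route from the paper's. After the same rotational reduction to $\gamma=0$, the paper proceeds algebraically: it writes the point as a convex combination $r = z e^{i\zeta_1}+(1-z)e^{-i\zeta_2}$ of the chord endpoints, squares to obtain $r^2=\left[2-2\cos(\zeta_1+\zeta_2)\right](z^2-z)+1$, and notes that this quadratic in $z$ admits a real solution only when its discriminant is nonnegative, i.e. $\cos(\zeta_1+\zeta_2)\le 2r^2-1=\cos(2\beta)$, whence $\zeta_1+\zeta_2\ge 2\beta$ by monotonicity of cosine on $[0,\pi]$ (the degenerate case $A=0$ being treated separately). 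Your argument replaces this computation with two classical geometric facts: a chord at perpendicular distance $h$ from the centre subtends a central angle of exactly $2\cos^{-1}(h)$, so minimizing the angle is equivalent to maximizing $h$; and the distance from the centre to a line through the fixed point $P$ is at most $|OP|=r$, with equality precisely for the chord perpendicular to $OP$. Your route is shorter and more transparent, and it delivers the attainment statement and the identity of the minimizer in the same stroke --- content the paper defers, without proof, to Remark~\ref{remark-minimum-triangle-angle}; the paper's version, by contrast, is self-contained trigonometric computation that needs no synthetic-geometry input. The only caveats in your write-up are ones you already flag: the boundary case $r=1$, where the optimal ``chord'' degenerates to a tangent point and the triangle collapses, consistently giving angle $0=2\beta$, and the standing assumption $0\le r\le 1$, which holds in the paper's application since $r=|\braket{a}{b}|$.
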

\begin{proof}
Note that the problem is invariant to the rotation by $\gamma$.
Thus, wlog we assume $\gamma=0$.
Let the two end points of the chord be $\exp(i \zeta_1)$ and $\exp(-i \zeta_2)$, where $\zeta_1,\zeta_2 \geq 0$.
The angle in question is $\zeta_1+\zeta_2$ and we show that $\zeta_1+\zeta_2 \geq 2\beta$.
The point $r \exp(i 0 )$ is a linear combination of these two end points:
$r \exp(i 0 )= z \exp(i \zeta_1) + (1-z) \exp(-i \zeta_2)$, where $0 \leq z \leq 1$.
Thus, $\zeta_1$, $\zeta_2$, and $z$ have to satisfy the constraint on the magnitude:
\begin{align*}
r^2
&=
[z \cos(\zeta_1) + (1-z) \cos(\zeta_2)]^2+
\\
&\hspace{13pt}[z \sin(\zeta_1) - (1-z) \sin(\zeta_2)]^2
\\
&=
\left[2-2\cos(\zeta_1+\zeta_2)\right] (z^2 -z) + 1.
\end{align*}
This implies that $z$ is a function of $\zeta_1+\zeta_2$.
Solving the quadratic equation, we get
\begin{align*}
z=
\frac
{A \pm \sqrt{A^2 - 4 A (1-r^2)}}
{2A} ,
\end{align*}
where $A=2-2\cos(\zeta_1+\zeta_2)$.
Note that if $A=0$, then $\zeta_1=\zeta_2=0$ which implies that $r=1$ and $\beta=0$; thus, $0=\zeta_1+\zeta_2 \geq 2\beta=0$ as claimed.
Otherwise, $A>0$ and in this case, $z$ has a real solution if 
\begin{align*}
0 & \leq A-4(1-r^2)
\\
\Longrightarrow
\cos(\zeta_1+\zeta_2) & \leq 2r^2-1
= 2 \cos^2(\beta)-1 = \cos(2\beta)
\end{align*}
where $r = \cos(\beta)$.
Since $\cos$ is a decreasing function in the domain $[0,\pi]$,
$\zeta_1+\zeta_2 \geq 2\beta$ as claimed.
\end{proof}

\begin{remark}
\label{remark-minimum-triangle-angle}
{\myrm
Note that the minimum angle of $2\beta$ in Lemma~\ref{lemma-minimum-angle-triangle} is achieved by
the triangle formed by the origin and the chord perpendicular to the line connecting the origin and $r \exp(i \gamma )$.
}
\end{remark}

\begin{lemma}
{\myrm
The solution to problem~\eqref{eqn-problem-min-U-single-vector} for the sum time-energy is lower bounded as follows:
\begin{align}
P_{\text{{\rm sum}}}(\ket{a},\ket{b})
=
f_\text{{\rm sum}}(\braket{a}{b}) 
\geq& \: 2\cos^{-1}
|\braket{a}{b}|
\nonumber
\\
\equiv& \:
f_\text{{\rm sum}}^\text{L}(\braket{a}{b}) .
\label{eqn-sum-norm-single-vector-solution-lower-bound}
\end{align}
}
\end{lemma}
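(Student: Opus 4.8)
The plan is to combine the two-eigenangle reduction of Lemma~\ref{lemma-Ui-two-terms} with the chord-angle bound of Lemma~\ref{lemma-minimum-angle-triangle}. First I would invoke Lemma~\ref{lemma-Ui-two-terms}: since the minimum of the sum time-energy is attained by a $U$ with only two non-zero eigenangles $\theta_1,\theta_2$, it suffices to lower bound $\sumnorm{U}=|\theta_1|+|\theta_2|$ over all such two-angle feasible $U$. For any of these, Eq.~\eqref{eqn-single-U-linear-combination} forces $\braket{a}{b}$ to be a convex combination (with the non-negative weights $|\braket{u_j}{a}|^2$ summing to one) of the two unit-circle points $\exp(i\theta_1)$ and $\exp(i\theta_2)$, so $\braket{a}{b}$ lies on the chord joining them. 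Writing $\braket{a}{b}=r\exp(i\gamma)$ with $r=|\braket{a}{b}|$, this places us exactly in the geometric setting of Lemma~\ref{lemma-minimum-angle-triangle}, with $\beta$ determined by $r=\cos\beta$.

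Next I would relate the objective $|\theta_1|+|\theta_2|$ to the angle $\Delta$ that the chord subtends at the origin (the triangle angle of Lemma~\ref{lemma-minimum-angle-triangle}). The reverse triangle inequality for real numbers gives $|\theta_1|+|\theta_2|\geq|\theta_1-\theta_2|$, and the central angle $\Delta=\min(|\theta_1-\theta_2|,\,2\pi-|\theta_1-\theta_2|)$ never exceeds $|\theta_1-\theta_2|$. Hence $\sumnorm{U}\geq|\theta_1-\theta_2|\geq\Delta$. Lemma~\ref{lemma-minimum-angle-triangle} then bounds $\Delta$ below by $2\beta=2\cos^{-1}r=2\cos^{-1}|\braket{a}{b}|$. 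Chaining these inequalities gives $\sumnorm{U}\geq 2\cos^{-1}|\braket{a}{b}|$ for every two-angle feasible $U$, and since by Lemma~\ref{lemma-Ui-two-terms} the minimum over all feasible $U$ is attained among these, we conclude $P_{\text{sum}}(\ket{a},\ket{b})=\min_U\sumnorm{U}\geq 2\cos^{-1}|\braket{a}{b}|$, which is the claim.

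The step I expect to require the most care is the passage from the absolute eigenangles $\theta_1,\theta_2\in(-\pi,\pi]$, measured from the zero eigenangle (the reference that $\sumnorm{}$ actually penalizes), to the triangle angle of Lemma~\ref{lemma-minimum-angle-triangle}, measured between the two chord endpoints relative to $\gamma$. One must verify the identification $\Delta=\min(|\theta_1-\theta_2|,2\pi-|\theta_1-\theta_2|)$ across all sign configurations of $\theta_1,\theta_2$ and confirm it is harmless when the endpoints subtend more than $\pi$: there $|\theta_1-\theta_2|>\pi$ forces $\theta_1$ and $\theta_2$ to have opposite signs, so $|\theta_1|+|\theta_2|=|\theta_1-\theta_2|>\pi\geq 2\beta$ and the bound holds trivially. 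I would also remark that the inequality is in general strict---equality requires the zero eigenangle to fall on the minor arc between the two optimal eigenangles (e.g.\ when $\gamma=0$), whereas for $|\gamma|>\beta$ the optimal perpendicular chord yields $\sumnorm{U}=2|\gamma|>2\beta$---which is precisely why only a lower bound, and not an exact value, is asserted here.
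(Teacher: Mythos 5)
Your proof is correct and follows essentially the same route as the paper's: reduce to two non-zero eigenangles via Lemma~\ref{lemma-Ui-two-terms}, observe that the chord through $\exp(i\theta_1)$, $\exp(i\theta_2)$ passes through $\braket{a}{b}$, and chain $|\theta_1|+|\theta_2|\geq|\theta_1-\theta_2|\geq\min(|\theta_1-\theta_2|,2\pi-|\theta_1-\theta_2|)\geq 2\beta$ using Lemma~\ref{lemma-minimum-angle-triangle}. The only cosmetic difference is that you handle all sign configurations with this single inequality chain (plus the $|\theta_1-\theta_2|>\pi$ observation), whereas the paper splits into explicit sign cases before applying the same bound.
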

\begin{proof}
The sum time-energy of $U$ is 
$|\theta_1|+|\theta_2|$ since $U$ has only two non-trivial eigenangles due to Lemma~\ref{lemma-Ui-two-terms}.
The chord defined by these two eigenangles, $\theta_1$ and $\theta_2$, passes through the point $\braket{a}{b}=r \exp(i \gamma)$, where $r=|\braket{a}{b}|$.
We consider the triangle formed by the origin and the chord and focus on
the angle at the origin.
For the case $\theta_1>0$, $\theta_2\leq0$
and the case $\theta_1\leq0$, $\theta_2>0$,
this angle is $\min(|\theta_1-\theta_2|,2\pi-|\theta_1-\theta_2|)$. 
According to Lemma~\ref{lemma-minimum-angle-triangle}, this angle is lower bounded by $2\beta$
where $\beta = \cos^{-1}(r)$.
Thus,
\begin{align*}
|\theta_1|+|\theta_2| =|\theta_1-\theta_2| 
&\geq
\min(|\theta_1-\theta_2|,2\pi-|\theta_1-\theta_2|)
\\
&\geq 2\beta .
\end{align*}
Equality is achieved when $|\theta_1-\theta_2| \leq \pi$ and the chord described by $\theta_1$ and $\theta_2$ 
is perpendicular to the line connecting the origin and $r \exp(i \gamma)$ (see Remark~\ref{remark-minimum-triangle-angle}).
For the case 
$\theta_1,\theta_2>0$, the case 
$\theta_1,\theta_2<0$, 
and the case $\theta_1=\theta_2=0$, 
the angle 
is $|\theta_1-\theta_2|$ which is lower bounded by $2\beta$ according to Lemma~\ref{lemma-minimum-angle-triangle}. 
Thus, we have
\begin{align*}
|\theta_1|+|\theta_2| \geq |\theta_1-\theta_2| \geq 2\beta .
\end{align*}
\end{proof}

\begin{lemma}
{\myrm
The solution to problem~\eqref{eqn-problem-min-U-single-vector} for the sum time-energy is upper bounded as follows:
\begin{align}
P_{\text{{\rm sum}}}(\ket{a},\ket{b})
=
f_\text{{\rm sum}}(\braket{a}{b}) 
\leq& \:
2\cos^{-1}(\operatorname{Re}(\braket{a}{b}))
\nonumber
\\
\equiv& \:
f_\text{{\rm sum}}^\text{U}(\braket{a}{b}) .
\label{eqn-sum-norm-single-vector-solution-upper-bound}
\end{align}
}
\end{lemma}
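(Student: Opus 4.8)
The plan is to prove the upper bound by exhibiting one feasible unitary and reading off its sum time-energy: since $P_{\text{sum}}(\ket{a},\ket{b})$ is a minimum over all $U$ satisfying $U\ket{a}=\ket{b}$, the sum time-energy of any single admissible $U$ is automatically an upper bound on $P_{\text{sum}}$.

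The convenient candidate is precisely the $U$ already shown to be optimal for the max time-energy in Sec.~\ref{sec-optimal-single-vector-transformation}: the unitary whose two non-trivial eigenangles are $\theta_1=+\cos^{-1}(\operatorname{Re}(\braket{a}{b}))$ and $\theta_2=-\cos^{-1}(\operatorname{Re}(\braket{a}{b}))$, corresponding to the vertical chord through the point $\braket{a}{b}$ (Fig.~\ref{fig:optimal-max-time-energy}), with all remaining eigenangles set to zero via $\tilde{U}^\perp$.

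First I would confirm feasibility. The constraint is identical to that of the max problem, so I only need the two endpoints of the vertical chord to be genuine vertices whose convex combination reproduces $\braket{a}{b}$ in Eq.~\eqref{eqn-single-U-linear-combination}. This holds because the vertical line $x=\operatorname{Re}(\braket{a}{b})$ meets the unit circle at $\operatorname{Re}(\braket{a}{b})\pm i\sqrt{1-\operatorname{Re}(\braket{a}{b})^2}$, and normalization ($|\braket{a}{b}|\leq 1$) forces $|\operatorname{Im}(\braket{a}{b})|\leq\sqrt{1-\operatorname{Re}(\braket{a}{b})^2}$, so $\braket{a}{b}$ lies between the two endpoints and the weights $|\braket{u_j}{a}|^2$ are non-negative and sum to one. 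The explicit construction in Eqs.~\eqref{eqn-U-with-actual-elements}--\eqref{eqn-tilde-Uiperp} then produces a genuine unitary carrying $\ket{a}$ to $\ket{b}$.

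Then I would simply evaluate the sum time-energy of this $U$. Since its only non-zero eigenangles are $\theta_1$ and $\theta_2$, the sum is $|\theta_1|+|\theta_2|=2\cos^{-1}(\operatorname{Re}(\braket{a}{b}))$, establishing $P_{\text{sum}}(\ket{a},\ket{b})\leq 2\cos^{-1}(\operatorname{Re}(\braket{a}{b}))=f_\text{sum}^\text{U}(\braket{a}{b})$. I do not expect a real obstacle here, as the argument reuses the already-established two-eigenangle construction and the only thing to verify is feasibility, which follows from normalization. The one point worth emphasizing is conceptual rather than technical: the unitary that is optimal for the max measure is in general suboptimal for the sum measure, so it merely supplies a feasible point, which is exactly why this reasoning yields an upper bound and not the exact value.
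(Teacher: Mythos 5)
Your proposal is correct and follows essentially the same route as the paper: both take the unitary that is optimal for the max time-energy (two non-trivial eigenangles $\pm\cos^{-1}(\operatorname{Re}(\braket{a}{b}))$, zero elsewhere) as a feasible point for the sum problem and read off $|\theta_1|+|\theta_2|=2\cos^{-1}(\operatorname{Re}(\braket{a}{b}))$. Your explicit feasibility check of the vertical chord is a harmless addition that the paper leaves implicit, since that construction was already justified in the max time-energy section.
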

\begin{proof}
Any $U$ that satisfies $U \ket{a}=\ket{b}$ (i.e., the constraint of problem~\eqref{eqn-problem-min-U-single-vector}) serves as an upper bound to 
$P_{\vec{\mu}}(\ket{a},\ket{b})$
for any $\vec{\mu}$.
Thus, for simplicity, we choose the optimal $U$ that achieves the optimal max time-energy in Eq.~\eqref{eqn-max-norm-single-vector-solution} to serve as an upper bound to 
$P_{\text{sum}}(\ket{a},\ket{b})$.
This $U$ has two non-trivial eigenangles $\theta_1=\cos^{-1}(\operatorname{Re}(\braket{a}{b}))$ and $\theta_2=-\theta_1$.
Thus, the sum time-energy of this $U$ is $|\theta_1|+|\theta_2|=2\cos^{-1}(\operatorname{Re}(\braket{a}{b}))$, and this is an upper bound to 
$P_{\text{sum}}(\ket{a},\ket{b})$.
\end{proof}

\section{Time-energy upper bound}
\label{sec-upper-bound}

In this section, we consider upper bounding $\munorm{\mathcal{F}}$ given its Kraus operators 
$(F_1,F_2,\dots,F_{\KrausDim})$ by upper bounding $\munorm{g(F_{1:\KrausDim})}$.
Any implementation $U$ of the quantum channel $\mathcal{F}$ serves as an upper bound to 
$\munorm{\mathcal{F}}$
since 
$\munorm{\mathcal{F}} \leq \munorm{U}$
for all $U$ of the form of
Eq.~\eqref{eqn-U-general-form}.
We propose a simple method to construct a time-energy efficient $U$ that completes the partial matrix $g(F_{1:\KrausDim})$, and $\munorm{U}$ will serve as an upper bound to the ``partial $U$ problem''~\eqref{eqn-energy-measure-partial-U} for $\munorm{g(F_{1:\KrausDim})}$,
which is an intermediate problem to the ultimate ``channel problem''~\eqref{eqn-energy-measure-channel}.

\subsection{Successive construction of $U$}

We focus on finding an upper bound to the ``partial $U$ problem''~\eqref{eqn-energy-measure-partial-U} which was recast as problem~\eqref{eqn-problem-original-min-U} which 
finds a unitary matrix $U$ that satisfies $n$ transformation rules:
$\ket{e_i} \longrightarrow \ket{b_i}  \:\: \text{for all }i=1,\ldots,n$.
Here, we focus on the ``partial $U$ problem'' for $\munorm{g(F_{1:\KrausDim})}$.
Thus, $\ket{b_i}$ is the $i$th column of $g(F_{1:\KrausDim})$.
In Sec.~\ref{sec-single-vector-transformation},
we analyzed the optimal unitary operation $U_i$ for each single-vector transformation $\ket{e_i} \longrightarrow \ket{b_i}$,
and we solved $\maxnorm{U_i}$ and bounded $\sumnorm{U_i}$.
Motivated by this result, we propose a greedy method to construct $U$ in which we successively construct the best unitary $U_i$ for each $i=1,\ldots,n$, and concatenate them.
The overall $U$ will be 
\begin{align}
\label{eqn-upper-bound-successive-U}
U=U_n \dots U_1.
\end{align}

We design each $U_i$ as follows.
When we consider the first transformation (i.e., $i=1$), we seek the optimal $U_1$ with the minimal $\munorm{U_1}$ such that
$$
U_1 \ket{e_1}= \ket{b_1}.
$$
When $i=2$, we seek the optimal $U_2$ such that
$$
U_2 U_1 \ket{e_2}= \ket{b_2}.
$$
In general, for the $i$th transformation, we seek the optimal $U_i$ such that
\begin{align}
\label{eqn-general-transform-1}
U_i \ket{a_i}= \ket{b_i}
\end{align}
where 
\begin{align}
\label{eqn-def-ai}
\ket{a_i}=U_{i-1}\cdots U_1 \ket{e_i} \text{ for } n \ge i\ge2 \text{ and } \ket{a_1}=\ket{e_1}.
\end{align}
Note that the optimal $U_i$ with the minimal $\munorm{U_i}$ has already been considered in problem~\eqref{eqn-problem-min-U-single-vector}.
We obtained the optimal value for $\maxnorm{U_i}$ and lower and upper bounds for $\sumnorm{U_i}$ (see Eqs.~\eqref{eqn-max-norm-single-vector-solution}, \eqref{eqn-sum-norm-single-vector-solution-lower-bound}, \eqref{eqn-sum-norm-single-vector-solution-upper-bound} ).
Thus,
\begin{align}
\label{eqn-Ui-function-of-ab}
\munorm{U_i}=f_{\vec{\mu}}(\braket{a_i}{b_i})
\end{align}
where the RHS comes from Eq.~\eqref{eqn-problem-min-U-single-vector-solution-form}.

A key feature of our construction is that we design $U_i$'s successively in a backward-looking fashion; i.e.,
when we design $U_i$, we only need to know $U_j$ for $j<i$ and we do not use $U_j$ for $j>i$.

In order for this successive approach to work, the action of a higher-index $U_{j}$ should not affect the transformation of a lower index $i<j$, i.e., 
\begin{align}
U_{i+1} U_i \ket{a_i} &= \ket{b_i}
\nonumber
\\
U_{i+2} U_{i+1} U_i \ket{a_i} &= \ket{b_i}
\nonumber
\\
&\vdots
\nonumber
\\
\label{eqn-subsequent-U-1}
U_n \cdots U_{i+2} U_{i+1} U_i \ket{a_i} &= \ket{b_i}.
\end{align}
Only if the last equation holds for all $i$ does the overall $U$ transforms according to Eq.~\eqref{eqn-overall-required-transformation} as required.
We show that Eq.~\eqref{eqn-subsequent-U-1} does hold.
\begin{lemma}
{\myrm
(Backward-looking design of $U_i$)
\begin{align}
U_{i+j} \cdots U_{i+2} U_{i+1} U_i \ket{a_i} &= \ket{b_i}
\end{align}
for $j\geq1$ when Eq.~\eqref{eqn-general-transform-1} holds.
}
\end{lemma}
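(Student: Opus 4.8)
The plan is to reduce the claim to the single statement that every later unitary fixes each earlier target, namely $U_k\ket{b_i}=\ket{b_i}$ for all $k>i$. Once this holds the assertion is immediate by composition: since $U_i\ket{a_i}=\ket{b_i}$ by Eq.~\eqref{eqn-general-transform-1}, applying $U_{i+1},\dots,U_{i+j}$ one after another leaves $\ket{b_i}$ unchanged, so $U_{i+j}\cdots U_{i+1}U_i\ket{a_i}=\ket{b_i}$. Thus the whole proof rests on showing that $U_k$ acts as the identity on $\ket{b_i}$ whenever $i<k$.

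To see why $U_k$ fixes $\ket{b_i}$, I would invoke the structure of the optimal single-vector transformation from Sec.~\ref{sec-optimal-single-vector-transformation}: by Eqs.~\eqref{eqn-Ui-breakdown} and \eqref{eqn-tilde-Uiperp}, $U_k$ splits as $\tilde{U}_k+\tilde{U}_k^\perp$, acting nontrivially only on the two-dimensional subspace $\mathrm{span}\{\ket{a_k},\ket{b_k}\}$ and as the identity on its orthogonal complement. Hence it suffices to prove that $\ket{b_i}$ is orthogonal to both $\ket{b_k}$ and $\ket{a_k}$ for $i<k$. Orthogonality to $\ket{b_k}$ comes for free, because $\{\ket{b_i}\}$ is an orthonormal set by the trace-preserving property. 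The real content is the relation $\braket{b_i}{a_k}=0$ for $i<k$.

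I would establish this by induction on $k$, proving the combined statement ``$U_m\ket{b_\ell}=\ket{b_\ell}$ for all $\ell<m$'' for every $m\le k$. In the inductive step I expand $\ket{a_k}=U_{k-1}\cdots U_1\ket{e_k}$ using Eq.~\eqref{eqn-def-ai} and write $\braket{b_i}{a_k}=\bra{b_i}U_{k-1}\cdots U_1\ket{e_k}$. First I pull $\bra{b_i}$ through $U_{k-1},\dots,U_{i+1}$: each of these carries an index exceeding $i$ and is covered by the induction hypothesis, so $U_m\ket{b_i}=\ket{b_i}$ and, by unitarity, $\bra{b_i}U_m=\bra{b_i}$. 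This collapses the expression to $\bra{b_i}U_iU_{i-1}\cdots U_1\ket{e_k}$. Next, from $U_i\ket{a_i}=\ket{b_i}$ and unitarity I get $\bra{b_i}U_i=\bra{a_i}$, and expanding $\bra{a_i}=\bra{e_i}U_1^\dag\cdots U_{i-1}^\dag$ the remaining product telescopes to $\braket{e_i}{e_k}$, which vanishes since $i<k$. Together with the orthonormality of the $\ket{b}$'s, this gives $U_k\ket{b_i}=\ket{b_i}$ and closes the induction.

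The step I expect to need the most care is the telescoping itself, because the pulling-through move for $U_{k-1},\dots,U_{i+1}$ uses exactly the claim being proved, only at strictly smaller indices; the argument is sound precisely because the induction on $k$ makes this dependence well-founded. A minor point to dispatch cleanly is the base case $k=2$, where there are no intermediate factors to pull through and the computation reduces directly to $\bra{b_1}U_1=\bra{a_1}=\bra{e_1}$, together with the degenerate case $\ket{a_i}=\ket{b_i}$ in which $U_i=I$.
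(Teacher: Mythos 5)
Your proof is correct and takes essentially the same approach as the paper's: both argue by induction that each later unitary fixes $\ket{b_i}$, by showing that $\ket{b_i}$ is orthogonal to the two-dimensional subspace $\mathrm{span}\{\ket{a_k},\ket{b_k}\}$ on which $U_k$ acts nontrivially (Eqs.~\eqref{eqn-Ui-breakdown} and \eqref{eqn-tilde-Uiperp}), using the orthonormality of the $\ket{b_j}$'s together with telescoping of unitary products down to $\braket{e_i}{e_k}=0$. The only difference is bookkeeping: the paper inducts on $j$ for fixed $i$ and tracks the product vector $U_{i+j-1}\cdots U_i\ket{a_i}$, so its $a$-orthogonality cancels in one step and the hypothesis is spent on the $b$-orthogonality, whereas you induct on the unitary index and spend the hypothesis pulling $\bra{b_i}$ through the intermediate factors --- equivalent content, organized into a slightly more modular fixed-point claim.
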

\begin{proof}
We need to use the essential properties that $\braket{e_i}{e_k}=\braket{b_i}{b_k}=\delta_{ik}$ where $\delta_{ik}$ is the Kronecker delta.
We prove by induction.
First we compute $U_{i+1} U_i \ket{a_i}$.
Recall from Eq.~\eqref{eqn-Ui-breakdown} that $U_{i+1}$ performs a non-trivial transformation only in the subspace spanned by $\ket{a_{i+1}}$ and $\ket{b_{i+1}}$.
We show that $U_i \ket{a_i}$ is not in this subspace.
Note that 
\begin{align}
\label{eqn-lemma-higher-U-no-effect1}
\bra{a_{i+1}}{U_i \ket{a_i}}
&=
\left[
\bra{e_{i+1}} U_1^\dag U_2^\dag \cdots U_i^\dag
\right]
U_i
\Big[
U_{i-1} \cdots U_1 \ket{e_i}
\Big]
\\
&=
0
\nonumber
\end{align}
by
Eq.~\eqref{eqn-def-ai}, and
also
$\bra{b_{i+1}}{U_i \ket{a_i}}=\braket{b_{i+1}}{b_i}=0$ 
by
Eq.~\eqref{eqn-general-transform-1}.
Hence, $U_i \ket{a_i}$ is not in the aforementioned subspace.
This shows that 
\begin{align*}
U_{i+1} U_i \ket{a_i}
&=(\tilde{U}_{i+1}
+
\tilde{U}_{i+1}^\perp)
U_i \ket{a_i}
\\
&=
\tilde{U}_{i+1}^\perp
U_i \ket{a_i}
\\
&=
U_i \ket{a_i}
\\
&=
\ket{b_i}
\end{align*}
since $\tilde{U}_{i+1}^\perp$ acts trivially on the orthogonal complement of 
the subspace spanned by $\ket{a_{i+1}}$ and $\ket{b_{i+1}}$ (c.f. Eq.~\eqref{eqn-tilde-Uiperp}).

Now consider $U_{i+j} \cdots U_{i+1} U_i \ket{a_i}$ assuming the hypothesis $U_{i+j-1} \cdots U_{i+1} U_i \ket{a_i}=\ket{b_i}$.
Similar to Eq.~\eqref{eqn-lemma-higher-U-no-effect1}, 
\begin{eqnarray*}
&&
\bra{a_{i+j}}{U_{i+j-1} \cdots U_{i+1} U_i \ket{a_i}}
\\
&=&
\left[
\bra{e_{i+j}} U_1^\dag U_2^\dag \cdots U_{i+j-1}^\dag
\right]
U_{i+j-1} \cdots 
\\&&
U_{i+1} U_i
\Big[
U_{i-1} \dots U_1 \ket{e_i}
\Big]
\\
&=&
0
\end{eqnarray*}
and
\begin{align*}
\bra{b_{i+j}}
{U_{i+j-1} \cdots U_{i+1} U_i \ket{a_i}}
=
\braket{b_{i+j}}{b_i}=0 
\end{align*}
using the hypothesis.

Therefore, ${U_{i+j-1} \cdots U_{i+1} U_i \ket{a_i}}$ is not in the subspace spanned by
\ket{a_{i+j}} and \ket{b_{i+j}} (the subspace that $U_{i+j}$ acts non-trivially), and thus
\begin{align*}
U_{i+j} U_{i+j-1} \cdots U_{i+1} U_i \ket{a_i}
=
U_{i+j-1} \cdots U_{i+1} U_i \ket{a_i}
=
\ket{b_i}.
\end{align*}
This proves the claim.
\end{proof}
This is the key lemma that allows us to compute an upper bound in a successive manner.
In essence, instead of considering the original problem of finding a unitary required to perform $n$ simultaneous transformations, we consider the problem of finding $n$ unitaries each required to perform one transformation. 
Note that we already solved the latter problem in Sec.~\ref{sec-optimal-single-vector-transformation}.
In particular, the max time-energy for a single-vector transformation is given in Eq.~\eqref{eqn-max-norm-single-vector-solution}.
Thus, according to Eq.~\eqref{eqn-upper-bound-successive-U}, 
\begin{align}
\munorm{U}&=\munorm{U_n \dots U_1}
\nonumber
\\
&\leq
\sum_{i=1}^n \munorm{U_i}
\nonumber
\\
\label{eqn-upper-bound-sum-Ui1}
&=
\sum_{i=1}^n f_{\vec{\mu}}(\braket{a_i}{b_i})
\end{align}
where the inequality is due to the triangle inequality of the norm $\munorm{\cdot}$ (see Theorem~2 of Ref.~\cite{Chau2011}) and the last line is due to Eq.~\eqref{eqn-Ui-function-of-ab}.

In light of Remark~\ref{remark-triangular-F1}, we can always assume that $F_1$ is initially given in upper triangular form.
In the following,
we use this property to further deduce a simple bound on $\munorm{U}$ and consequently $\munorm{\mathcal{F}}$.
We show that when $F_1$ is upper triangular (which can always be guaranteed by Remark~\ref{remark-triangular-F1}),
$\ket{a_i}=\ket{e_i}$ for all $1 \le i \le n$ in Eqs.~\eqref{eqn-general-transform-1}--\eqref{eqn-def-ai}.
Thus, $\munorm{U_i}$ in Eq.~\eqref{eqn-upper-bound-sum-Ui1} will only depend on 
$\braket{e_i}{b_i}$ which is the $i$th eigenvalue of $F_1$.

\begin{lemma}
\label{lemma-independent-design-of-Ui}
{\myrm
(Independent design of $U_i$)
When $F_1$ is upper triangular,
$\ket{a_i}=\ket{e_i}$ for all $1 \le i \le n$ in Eqs.~\eqref{eqn-general-transform-1}--\eqref{eqn-def-ai},
which implies that designing according to
\begin{align}
U_i \ket{e_i} = \ket{b_i}
\end{align}
and designing according to Eq.~\eqref{eqn-general-transform-1} are equivalent.
}
\end{lemma}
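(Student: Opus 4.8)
The plan is to prove by strong induction on $i$ that $\ket{a_i}=\ket{e_i}$; the asserted equivalence of the two design rules is then immediate, since substituting $\ket{a_i}=\ket{e_i}$ into Eq.~\eqref{eqn-general-transform-1} yields exactly $U_i\ket{e_i}=\ket{b_i}$. The base case $i=1$ holds by definition, $\ket{a_1}=\ket{e_1}$ (Eq.~\eqref{eqn-def-ai}).

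For the inductive step I would exploit the structure recorded in Eqs.~\eqref{eqn-Ui-breakdown}--\eqref{eqn-tilde-Uiperp}: each $U_j$ acts as the identity outside the two-dimensional subspace $\mathrm{span}\{\ket{a_j},\ket{b_j}\}$. Hence, to conclude $\ket{a_i}=U_{i-1}\cdots U_1\ket{e_i}=\ket{e_i}$, it suffices to verify that $\ket{e_i}$ is orthogonal to that subspace for every $j<i$, so that each such $U_j$ fixes $\ket{e_i}$ and the whole product leaves it unchanged. Invoking the induction hypothesis $\ket{a_j}=\ket{e_j}$ for all $j<i$, the subspace on which $U_j$ acts nontrivially is $\mathrm{span}\{\ket{e_j},\ket{b_j}\}$, so I only need the two inner products $\braket{e_j}{e_i}$ and $\braket{b_j}{e_i}$ to vanish.

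The first is automatic, since $\braket{e_j}{e_i}=0$ whenever $j<i$. The second is exactly where upper triangularity enters, and this is the one substantive point. Recall that $\ket{b_j}$ is the $j$th column of $g(F_{1:d})$, i.e.\ the stack of the $j$th columns of $F_1,\dots,F_d$, while for $i\le n$ the vector $\ket{e_i}$ has its single nonzero entry in the top ($F_1$) block at position $i$. Therefore $\braket{b_j}{e_i}=\overline{(F_1)_{ij}}$, which vanishes for $i>j$ precisely because $F_1$ is upper triangular (Remark~\ref{remark-triangular-F1}). With both inner products zero, $U_j\ket{e_i}=\ket{e_i}$ for each $j<i$, and applying $U_1,\dots,U_{i-1}$ in turn gives $\ket{a_i}=\ket{e_i}$, closing the induction.

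The main obstacle is a mild one: keeping the indexing straight, so that the relevant inner product is correctly identified as the complex conjugate of the $(i,j)$ entry of $F_1$ rather than an entry of some other Kraus block or a transposed index. Once that identification is secured, the condition $j<i$ matches the upper-triangular zero pattern $(F_1)_{ij}=0$ for $i>j$, and no further computation is required.
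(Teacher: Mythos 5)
Your proof is correct and follows essentially the same route as the paper: strong induction on $i$, using the fact that each $U_j$ acts nontrivially only on $\mathrm{span}\{\ket{a_j},\ket{b_j}\}=\mathrm{span}\{\ket{e_j},\ket{b_j}\}$ (by the hypothesis) together with $\braket{e_j}{e_i}=0$ and $\braket{b_j}{e_i}=0$ from the upper-triangular structure of $F_1$. Your explicit identification $\braket{b_j}{e_i}=\overline{(F_1)_{ij}}$ is a nice touch that the paper merely asserts.
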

\begin{proof}

Note that since $F_1$ is upper triangular, $\braket{b_{j}}{e_i}=0$ for 
$j<i\le n$.
We prove by induction.
We show that $\ket{a_{i}}=\ket{e_{i}}$ assuming the hypothesis that $\ket{a_j}=\ket{e_j}$ for $j<i$ is true.
Note that $\ket{a_1}=\ket{e_1}$ by definition in Eq.~\eqref{eqn-def-ai}.
Recall from Eq.~\eqref{eqn-Ui-breakdown} that $U_{j}$ performs a non-trivial transformation only in the subspace spanned by $\ket{a_{j}}$ and $\ket{b_{j}}$.
We show that $\ket{e_i}$ for $i>j$ is not in this subspace:
$
\braket{a_{j}}{e_{i}}
=
\braket{e_{j}}{e_{i}}
=0
$
where we assume that the hypothesis is true,
and
$\braket{b_{j}}{e_{i}}=0$ due to the triangular structure of $F_1$.
This means that for $i>j$,
\begin{align*}
U_j \ket{e_i}
&=(\tilde{U}_{j}
+
\tilde{U}_{j}^\perp)
\ket{e_i}
\\
&=
\tilde{U}_{j}^\perp
\ket{e_i}
\\
&=
\ket{e_i}
\end{align*}
since $\tilde{U}_{j}^\perp$ acts trivially (c.f. Eq.~\eqref{eqn-tilde-Uiperp}).
This shows that
\begin{align*}
\ket{a_i}&=
U_{i-1}\cdots U_1 \ket{e_i}
\\
&=
\ket{e_i}.
\end{align*}
\end{proof}

This lemma allows us to compute the upper bound easier since 
the upper bound in Eq.~\eqref{eqn-upper-bound-sum-Ui1}
now becomes
\begin{align}
\munorm{U}
&\leq
\sum_{i=1}^n f_{\vec{\mu}}(\braket{e_i}{b_i})
\\
&=
\sum_{i=1}^n f_{\vec{\mu}}(\lambda_i(F_1))
\end{align}
where we recognize that the diagonal elements of $F_1$ are its eigenvalues, denoted by $\lambda_i(F_1)$.
This shows that $\munorm{U}$ only depends on the eigenvalues of one Kraus operator of the quantum channel.
Since $\munorm{\mathcal{F}} \leq \munorm{U}$ for all $U$ of the form of
Eq.~\eqref{eqn-U-general-form}, we have
\dmathX2{
\munorm{\mathcal{F}} &\leq& \min_\mathbf{v} & \sum_{i=1}^n f_{\vec{\mu}}(\lambda_i(\sum_{j=1}^d v_j F_j))\cr
&&\text{s.t.}&\sum_{j=1}^d |v_j|^2 \leq 1 .&eqn-upper-bound-general-form\cr
}
Here, $\mathbf{v}$ corresponds to the first $d$ elements of the first row of $W_B$ in Eq.~\eqref{eqn-U-general-form}.
This bound holds for any quantum channel $\mathcal{F}$ described by Kraus operators $F_i, i=1,\dots,d$.

\subsection{Max time-energy upper bound}

For the max time-energy, we substitute $f_\text{max}$ in Eq.~\eqref{eqn-max-norm-single-vector-solution} for $f_{\vec{\mu}}$ in Eq.~\eqref{eqn-upper-bound-general-form} to get
\begin{align}
\maxnorm{\mathcal{F}} &\leq 
\min_{\mathbf{v}: \: \lVert \mathbf{v} \rVert \leq 1}
\sum_{i=1}^n 
\cos^{-1}
\Big[
\operatorname{Re}
(\lambda_i(\sum_{j=1}^d v_j F_j)) 
\Big].
\end{align}

\subsection{Sum time-energy upper bound}

For the sum time-energy, we substitute $f_\text{sum}^\text{U}$ in Eq.~\eqref{eqn-sum-norm-single-vector-solution-upper-bound} for $f_{\vec{\mu}}$ in Eq.~\eqref{eqn-upper-bound-general-form} to get
\begin{align}
\label{eqn-sum-energy-upper-bound}
\sumnorm{\mathcal{F}} &\leq 
\min_{\mathbf{v}: \: \lVert \mathbf{v} \rVert \leq 1}
\sum_{i=1}^n 
2\cos^{-1}
\Big[
\operatorname{Re}
(\lambda_i(\sum_{j=1}^d v_j F_j)) 
\Big].
\end{align}


\subsection{Special case: diagonal $F_1$}

Here, we consider 
the special class of channels for which $F_1$ is diagonal, which will be useful for showing the optimal time-energy for a class of channels in Sec.~\ref{sec-optimal-energy}.
Following the construction of $U$ in Eq.~\eqref{eqn-upper-bound-successive-U},
we argue not only that we can independently design $U_i$ (Lemma~\ref{lemma-independent-design-of-Ui}), but also that they act on orthogonal subspaces (meaning that they commute).
According to Lemma~\ref{lemma-independent-design-of-Ui}, $U_i$ transforms $\ket{e_i}$ to $\ket{b_i}$, and
is the solution to problem~\eqref{eqn-problem-min-U-single-vector} where $\ket{a}=\ket{e_i}$ and $\ket{b}=\ket{b_i}$. 
Thus, its non-trivial part $\tilde{U}_i$ [see Eq.~\eqref{eqn-Ui-breakdown}] acts on the subspace spanned by $\{\ket{e_i},\ket{b_i}\}$.

The fact that $F_1$ is diagonal gives rise to the following property:
$\braket{e_i}{b_j}=0$ if $i\neq j$ where $1 \leq i,j \leq n$.
Also, we already have 
$\braket{e_i}{e_j}=0$ if $i\neq j$ 
by definition and
$\braket{b_i}{b_j}=0$ if $i\neq j$
by the trace-preserving property of quantum channels.
Thus, the subspaces spanned by $\{\ket{e_i},\ket{b_i}\}$ for $i=1,\dots,n$ are orthogonal to each other.
This means that $\tilde{U}_i$ acting on these subspaces [see Eq.~\eqref{eqn-tilde-Ui}] are orthogonal to each other.
Then we may bypass the construction of $U_i$ in Eq.~\eqref{eqn-Ui-breakdown} and directly form
$$
U=\sum_{i=1}^n \tilde{U}_i + P_{U^\perp}
$$
where $P_{U^\perp}$ is the projection onto the subspace orthogonal to the summation term.
Thus, the set of non-zero eigenangles of $U$ is composed of the eigenangles of $\tilde{U}_i$ for all $i=1,\dots,n$.
This means that
\begin{align}
\maxnorm{U}&=\max_{1\leq i \leq n} \maxnorm{\tilde{U}_i}
\\
\sumnorm{U}&=\sum_{i=1}^n \sumnorm{\tilde{U}_i} .
\end{align}
Thus, for this class of channels, by using Eqs.~\eqref{eqn-max-norm-single-vector-solution}, \eqref{eqn-sum-norm-single-vector-solution-upper-bound}, \eqref{eqn-Ui-function-of-ab} and $\munorm{U_i}=\munorm{\tilde{U}_i}$, we have
\begin{align}
\label{eqn-upper-bound-diagonal-E1-maxnorm}
\maxnorm{\mathcal{F}} &\leq \maxnorm{U} =
\max_{1\leq i \leq n}
\cos^{-1} [\operatorname{Re} (\lambda_i(F_1)) ]
\\
\sumnorm{\mathcal{F}} &\leq
\sumnorm{U}
\leq
\sum_{i=1}^n
2 \cos^{-1} [\operatorname{Re} (\lambda_i(F_1)) ] .
\end{align}

\section{Time-energy Lower bound}
\label{sec-lower-bound}

\subsection{General form of lower bound}
We lower bound $\munorm{\mathcal{F}}$ of the ``channel problem''~\eqref{eqn-energy-measure-channel}.
We first consider lower bounding
$\munorm{g(F_{1:\KrausDim})}$ for a fixed set of Kraus operators $(F_1,F_2,\dots,F_{\KrausDim})$.
Note that $\munorm{g(F_{1:\KrausDim})}$ is obtained from the ``partial $U$ problem''~\eqref{eqn-problem-original-min-U},
and we propose a modified problem whose solution lower bounds this problem.
The modified problem is formed by removing all except one transformation constraints in problem~\eqref{eqn-problem-original-min-U} as follows:
\dmathX2{
P_{\vec{\mu}}(\ket{e_i},\ket{b_i}) &=& \displaystyle\min_{U} & 
\munorm{ U }\cr
&&\text{s.t.}&U\ket{e_i} = \ket{b_i}\text{ with}\cr
&&&U \in \myUgrp(r)
}
which is defined for $i=1,\dots,n$.
Here $\ket{b_i}$ is the $i$th column of 
$g(QF_1Q^\dag,F_2Q^\dag,\dots,F_{\KrausDim}Q^\dag)$ where $QF_1Q^\dag$ is an upper triangular matrix corresponding to the Schur decomposition of $F_1$ (see Remark~\ref{remark-triangular-F1}).
Note that this problem 
is the single-vector problem~\eqref{eqn-problem-min-U-single-vector} which we analyzed in Sec.~\ref{sec-single-vector-transformation}.
Certainly, the feasible set of this problem contains that of problem~\eqref{eqn-problem-original-min-U}, and so with the help of Lemma~\ref{lemma-U-norm-unchanged-by-conjugation}, we have 
$\munorm{ g(F_1,F_2,\dots,F_{\KrausDim}) }=\munorm{g(QF_1Q^\dag,F_2Q^\dag,\dots,F_{\KrausDim}Q^\dag)} \geq 
P_{\vec{\mu}}(\ket{e_i},\ket{b_i})
$ for all $i=1,\dots,n$.
Thus, 
\begin{align}
\munorm{ g(F_1,F_2,\dots,F_{\KrausDim}) }
&\geq 
\max_{1\leq i \leq n}
P_{\vec{\mu}}(\ket{e_i},\ket{b_i})
\nonumber
\\
&=
\max_{1\leq i \leq n} f_{\vec{\mu}}(\braket{e_i}{b_i})
\nonumber
\\
&=
\max_{1\leq i \leq n} f_{\vec{\mu}}(\lambda_i(F_1))
\nonumber
\end{align}
where we used Eq.~\eqref{eqn-problem-min-U-single-vector-solution-form} in the second line and 
the third line is due to Remark~\ref{remark-triangular-F1} with $\lambda_i(F_1)$ being the $i$th eigenvalue of $F_1$.
Note that the last inequality holds when the LHS is replaced by $\munorm{ g(F_1,F_2,\dots,F_{\KrausDim},\mathbf{0},\dots,\mathbf{0}) }$ for any number of extra all-zero Kraus operators inserted.
Combining with 
the ``channel problem''~\eqref{eqn-energy-measure-channel} ,
we have
\dmathX2{
\munorm{\mathcal{F}} &\geq& \min_\mathbf{v} & \max_{1\leq i \leq n}
f_{\vec{\mu}}(\lambda_i(\sum_{j=1}^d v_j F_j))\cr
&&\text{s.t.}&\sum_{j=1}^d |v_j|^2 \leq 1 .&eqn-lower-bound-general-form\cr
}
Here, $\mathbf{v}$ corresponds to the first $d$ elements of the first row of $W_B$ in Eq.~\eqref{eqn-U-general-form}.
This bound holds for any quantum channel $\mathcal{F}$ described by Kraus operators $F_i, i=1,\dots,d$.

\subsection{Max time-energy lower bound}
For the max time-energy, we substitute $f_\text{max}$ in Eq.~\eqref{eqn-max-norm-single-vector-solution} for $f_{\vec{\mu}}$ in Eq.~\eqref{eqn-lower-bound-general-form} to get
\begin{align}
\label{eqn-max-energy-lower-bound1}
\maxnorm{\mathcal{F}} 
\geq
&
\min_{\mathbf{v}: \: \lVert \mathbf{v} \rVert \leq 1}
\:\:
\max_{1\leq i \leq n}
\cos^{-1} 
\Big[
\operatorname{Re}(\lambda_i(\sum_{j=1}^d v_j F_j)) 
\Big].
\end{align}

\subsection{Sum time-energy lower bound}
For the sum time-energy, we substitute $f_\text{sum}$ for $f_{\vec{\mu}}$ in Eq.~\eqref{eqn-lower-bound-general-form} 
to get
\begin{align}
\sumnorm{\mathcal{F}} \geq 
\min_{\mathbf{v}: \: \lVert \mathbf{v} \rVert \leq 1}
\:\:
\max_{1\leq i \leq n}
f_\text{sum}(\lambda_i(\sum_{j=1}^d v_j F_j)) 
\end{align}
and it is a simple argument to argue that the lower bound of $\sumnorm{\mathcal{F}}$ can be  given in terms of
$f_\text{sum}^\text{L}$ defined in Eq.~\eqref{eqn-sum-norm-single-vector-solution-lower-bound}.
First, note that
\begin{align*}
P(\mathbf{v}) &\equiv
\max_i f_\text{sum}(\lambda_i(\sum_{j=1}^d v_j F_j)) 
\\
&\geq 
\max_i f_\text{sum}^\text{L}(\lambda_i(\sum_{j=1}^d v_j F_j)) 
\equiv
Q(\mathbf{v})
\end{align*}
for all $\mathbf{v}$.
And we have
\begin{align*}
\min_{\mathbf{v}} P(\mathbf{v}) = P(\mathbf{\hat{v}}) \geq Q(\mathbf{\hat{v}}) \geq \min_{\mathbf{v}} Q(\mathbf{v}) 
\end{align*}
where 
$\mathbf{\hat{v}} =\arg \min_{\mathbf{v}} P(\mathbf{v})$.
Therefore, we have
\begin{align*}
\sumnorm{\mathcal{F}} 
&\geq 
\min_{\mathbf{v}: \: \lVert \mathbf{v} \rVert \leq 1}
Q(\mathbf{v})
\\
&=
\min_{\mathbf{v}: \: \lVert \mathbf{v} \rVert \leq 1}
\:\:
\max_{1\leq i \leq n}
2 \cos^{-1} |\lambda_i(\sum_{j=1}^d v_j F_j)| .
\end{align*}
Note that $Q(\mathbf{v}) \leq Q(\kappa\mathbf{v})$ for $0 \leq \kappa \leq 1$, and thus we have
\begin{align}
\sumnorm{\mathcal{F}} 
&\geq 
\min_{\mathbf{v}: \: \lVert \mathbf{v} \rVert = 1}
\:\:
\max_{1\leq i \leq n}
2 \cos^{-1} |\lambda_i(\sum_{j=1}^d v_j F_j)| .
\end{align}

\section{Optimal time-energy for a class of channels}
\label{sec-optimal-energy}

\begin{definition}
\label{def-class1}
{\myrm
Define a class $\mathcal{C}(n)$ of quantum channels acting on 
$n \times n$
density matrices 
where each channel is described by
Kraus operators $\{ F_j \in \IC^{n \times n}: j=1,\ldots,d \}$ of the form
\begin{align}
{F}_1 &= \sqrt{p} I \text{ where $0 \leq p \leq 1$}
\label{eqn-channel-class-E1}
\\
\tr({F}_j) &= 0, \:\: j=2,\ldots,d.
\nonumber
\end{align}
The number $d$ of Kraus operators of each channel can be different.
}
\end{definition}
Note that this class contains the depolarizing channel, the bit-flip channel, and the phase-flip channel.

\subsection{Optimal max time-energy}
We show that the lower bound of $\maxnorm{\mathcal{F}}$ in
Eq.~\eqref{eqn-max-energy-lower-bound1} is achievable for any
$\mathcal{F} \in \mathcal{C}(n)$ with $2 \leq n < \infty$.
The RHS of Eq.~\eqref{eqn-max-energy-lower-bound1} can be written as
\begin{align}
\label{eqn-max-energy-lower-bound2-swap}
\cos^{-1} 
\Big[
\max_{\mathbf{v}: \: \lVert \mathbf{v} \rVert \leq 1}
\:\:
\min_{1\leq i \leq n}
\operatorname{Re}(\lambda_i(\sum_{j=1}^d v_j F_j)) 
\Big],
\end{align}
since $\cos^{-1}$ is a decreasing function in the range $[0,\pi]$.
Consider part of this term:
\begin{align*}
P(\mathbf{v}) &\equiv
\min_{1\leq i \leq n} \operatorname{Re}(\lambda_i(\sum_{j=1}^d v_j F_j))
\\
& \leq 
\frac{1}{n}
\sum_{i=1}^n
\operatorname{Re}(\lambda_i(\sum_{j=1}^d v_j F_j))
\equiv Q(\mathbf{v}) .
\end{align*}
Let the eigenvalues of $\sum_{j=2}^d v_j F_j$ be $\{\sigma_1,\dots,\sigma_n\}$.
Note that $\sum_{i=1}^n \sigma_i = 0$.
Thus, 
$$
\lambda_i(\sum_{j=1}^d v_j F_j)
=
v_1 \sqrt{p} + \sigma_i
$$
and
$$
Q(\mathbf{v}) =
\frac{1}{n}
\sum_{i=1}^n
\operatorname{Re} (v_1 \sqrt{p} + \sigma_i)
=
\operatorname{Re} (v_1 \sqrt{p})
\leq \sqrt{p} .
$$
Considering the maximization in 
Eq.~\eqref{eqn-max-energy-lower-bound2-swap},
$$
\max_{\mathbf{v}} P(\mathbf{v})
\leq Q(\mathbf{\hat{v}})
\leq \max_{\mathbf{v}}
Q(\mathbf{v})
$$
where $\mathbf{\hat{v}}$ is the optimal value of the maximization of $P$.
Thus,
$$
\maxnorm{\mathcal{F}} \geq
\cos^{-1} (
\max_{\mathbf{v}} P(\mathbf{v})
)
\geq
\cos^{-1} (
\sqrt{p}
)
$$
since $\cos^{-1}$ is a decreasing function in the range $[0,\pi]$.
Note that the RHS 
coincides with the upper bound from
Eq.~\eqref{eqn-upper-bound-diagonal-E1-maxnorm}
where $\lambda_i(F_1)=\sqrt{p}$.
Therefore,
\begin{equation}
\label{eqn-channel-class-optimal-maxnorm}
\maxnorm{\mathcal{F}} =
\cos^{-1} (
\sqrt{p}
) 
\end{equation}
for any
$\mathcal{F} \in \mathcal{C}(n)$ with $2 \leq n < \infty$.

\section{Some interesting consequences}
\label{sec-consequences}

\subsection{Comparison of quantum and classical noisy channels}

\newcommand{\sigmaS}{S}

The quantum noisy channel or the quantum depolarizing channel acting on 
$n \times n$
density matrices is defined as
\begin{align*}
\mathcal{F}_Q (\rho) & \triangleq q \rho + (1-q) \frac{I}{n}
\end{align*}
where 
complete positivity requires that
$-1/(n^2-1) \leq q \leq 1$~\cite{King2003}.

The Weyl operators are the $n$-dimensional generalization of the Pauli operators and are defined as 
$$
\sigmaS_{jk}=\sum_{s=0}^{n-1} \omega^{sk} \ket{s+j}\bra{s} \in \IC^{n \times n}
$$
where $j,k=0,\dots,n-1$ and $\omega$ is the $n$th root of unity.
These operators have the following properties:
\begin{itemize}
\item
(Identity) $S_{00}=I$.
\item
(Traceless) $\tr (S_{jk})=0$ for $(j,k)\neq(0,0)$.
\item
(Complete erasure)
$
n^{-2}
\sum_{j,k=0}^{n-1} \sigmaS_{jk} \rho \sigmaS_{jk}^\dag = n^{-1} I
$
for any density matrix $\rho$.
\item
(Trace preserving)
$
n^{-2}
\sum_{j,k=0}^{n-1} \sigmaS_{jk}^\dag \sigmaS_{jk} =  I
$.
\item
(Complete erasure)
$
n^{-1}
\sum_{j=0}^{n-1} \sigmaS_{j0} \rho \sigmaS_{j0}^\dag = n^{-1} I
$
for any diagonal density matrix $\rho$.
\item
(Trace preserving)
$
n^{-1}
\sum_{j}^{n-1} \sigmaS_{j0}^\dag \sigmaS_{j0} = I
$.
\end{itemize}

We may express the quantum noisy channel 
using the Weyl's operators:
\begin{align}
\mathcal{F}_Q (\rho) 
&=
q I \rho I + (1-q) \frac{1}{n^2} \sum_{j,k=0}^{n-1} \sigmaS_{jk} \rho \sigmaS_{jk}^\dag .
\label{eqn-quantum-erasure-channel1}
\end{align}
The advantage of doing so is that we can now see that $\mathcal{F}_Q (\rho)$
is in the class $\mathcal{C}(n)$ (defined in Definition~\ref{def-class1}) and we can apply
Eq.~\eqref{eqn-channel-class-optimal-maxnorm} to get the max time-energy of it.

In a similar manner, we define the classical noisy channel which adds classical noise (i.e., classical states are being remapped):
\begin{align}
\mathcal{F}_C (\rho) 
&\triangleq
q I \rho I + (1-q) \frac{1}{n} \sum_{j=0}^{n-1} \sigmaS_{j0} \rho \sigmaS_{j0}^\dag ,
\label{eqn-classical-erasure-channel1}
\end{align}
where $-1/(n-1) \leq q \leq 1$ for complete positivity.
When $\rho$ is diagonal (i.e., a mixture of classical states), 
$$\mathcal{E}_C (\rho) 
 = q \rho + (1-q) \frac{I}{n}.$$

We now verify that we have a fair comparison between the quantum and classical noisy channels,
by checking that
the same amount of noise is added to the input states of both channels.
To quantify this, we use the trace distance to measure the difference between the input state and the output state, and we take the input state to be a pure state.
For the quantum depolarizing channel, consider any pure input state $\ket{\Phi}\bra{\Phi}$ and the trace distance is
\begin{align}
\frac{1}{2}
\Big\lVert \ket{\Phi}\bra{\Phi} - \mathcal{F}_Q (\ket{\Phi}\bra{\Phi}) \Big\rVert_\text{tr}
&=
\frac{1}{2}
(1-q)\Big\lVert \ket{\Phi}\bra{\Phi} - \frac{I}{n} \Big\rVert_\text{tr}
\nonumber
\\
&=
\frac{(1-q) (n-1)}{n} \equiv \delta
\label{eqn-erasure-channel-trace-distance}
\end{align}
where $\lVert A \rVert_\text{tr}$ denotes the trace norm of $A$ and is equal to the sum of all singular values of $A$.
For the classical noisy channel, we only consider classical states and so the input state is $\ket{j}, j=0,\dots,n-1$, and in this case, the trace distance is
\begin{align*}
\frac{1}{2}
\Big\lVert \ket{j}\bra{j} - \mathcal{F}_C (\ket{j}\bra{j}) \Big\rVert_\text{tr}
&=
\frac{1}{2}
(1-q)\Big\lVert \ket{j}\bra{j} - \frac{I}{n} \Big\rVert_\text{tr}
\\
&=
\delta .
\end{align*}
This shows that both $\mathcal{F}_Q$ and $\mathcal{F}_C$ 
adds the same amount of noise when both are characterized by the same parameter $q$.
Note that the two channels have different valid ranges of $q$, but this does not affect our discussion since we will focus on $q \approx 1$.

It can be easily seen that both $\mathcal{F}_Q$ and $\mathcal{F}_C$ written with the Weyl's operators are in the class $\mathcal{C}(n)$.
In this case, $\maxnorm{\mathcal{F}_Q}$ and $\maxnorm{\mathcal{F}_C}$ only depend on the respective scaling factors of the identity Kraus operators (c.f. Eq.~\eqref{eqn-channel-class-optimal-maxnorm}).
Note that $S_{00}=I$ and thus from
Eqs.~\eqref{eqn-quantum-erasure-channel1} and \eqref{eqn-classical-erasure-channel1}, the identity Kraus operators (c.f. Eq.~\eqref{eqn-channel-class-E1}) are
\begin{align*}
F_{Q1}&=\sqrt{ q + \frac{1-q}{n^2} } I 
=
\sqrt{1-\delta \left(\frac{n+1}{n}\right)} I
\\
F_{C1}&=\sqrt{ q + \frac{1-q}{n} } I 
=
\sqrt{1-\delta} I
\end{align*}
where we have used Eq.~\eqref{eqn-erasure-channel-trace-distance}.
Note that when $q=1$, $F_{Q1}=F_{C1}=I$; when $q=-1/(n^2-1)$, $F_{Q1}=0$ and when $q=-1/(n-1)$, $F_{C1}=0$.
Using these in Eq.~\eqref{eqn-channel-class-optimal-maxnorm}, we have
\begin{align}
\maxnorm{\mathcal{F}_Q}
&=\cos^{-1}
\sqrt{1-\delta \left(\frac{n+1}{n}\right)}
\\
\maxnorm{\mathcal{F}_C}
&=\cos^{-1}
\sqrt{1-\delta}
\end{align}
where in both cases the distance between the input and output states is $\delta$.
When $\delta \approx 0$, 
using the approximation
$\cos^{-1} \sqrt{x} \approx \sqrt{1-x}$ for $x \approx 1$, 
it can be shown that
\begin{align}
\maxnorm{\mathcal{F}_Q}
&=
\sqrt{\frac{n+1}{n}}
\maxnorm{\mathcal{F}_C} .
\end{align}
This shows that it takes 
$\sqrt{(n+1)/n}$ 
times more time-energy resource for a quantum process to erase information of the input state by the same distance $\delta$ compared to a classical process.
For two-level systems ($n=2$), this is $1.22$ times larger.

\subsection{Cascade of 
depolarizing channels with small noise
}

As discussed in the last subsection, the quantum depolarizing channel
\begin{align*}
\mathcal{F}_Q (\rho) & \triangleq q \rho + (1-q) \frac{I}{n}
\end{align*}
is in the class $\mathcal{C}(n)$ (defined in Definition~\ref{def-class1}) and we can apply
Eq.~\eqref{eqn-channel-class-optimal-maxnorm} to get
\begin{align}
\maxnorm{\mathcal{F}_Q}
&=\cos^{-1}
\sqrt{ q + \frac{1-q}{n^2} } .
\end{align}
Suppose that we run this channel 
$k$ times.
We can either run (i) a unitary implementation of $\mathcal{F}_Q$ $k$ times (with the ancilla reset before the start of a new run) or 
(ii) a unitary implementation of 
$$
\mathcal{F}_Q^{(k)} (\rho) \triangleq
\underbrace{\mathcal{F}_Q \circ \mathcal{F}_Q \circ \cdots \circ \mathcal{F}_Q}_{\mbox{$k$ times}} (\rho)
= q^k \rho + (1-q^k) \frac{I}{n}.
$$
For case (i), 
since each run is executed independently of each other,
the total time-energy is $k \maxnorm{\mathcal{F}_Q}$.

For case (ii), 
applying Eq.~\eqref{eqn-channel-class-optimal-maxnorm} gives
$$
\maxnorm{\mathcal{F}_Q^{(k)}} 
=
\cos^{-1} \sqrt{
q^k + \frac{1-q^k}{n^2}
}.
$$
Using Taylor series expansion, we approximate $\cos^{-1} \sqrt{x} \approx \sqrt{1-x}$ for $x \approx 1$ and 
$$
q^k + \frac{1-q^k}{n^2}
\approx 1 - k \left( 1- \frac{1}{n^2} \right) (1-q)$$ 
for $q \approx 1$.
Thus, we have 
$$
\maxnorm{\mathcal{F}_Q^{(k)}} 
\approx 
\sqrt{k}
\sqrt{1- \left(q+\frac{1-q}{n^2}\right)}
$$
which implies that
\begin{align}
\maxnorm{\mathcal{F}_Q^{(k)}} 
\approx 
\sqrt{k}
\:
\maxnorm{\mathcal{F}_Q} .
\end{align}
This means 
that considering all $k$ channels together saves time-energy resource by a factor of $\sqrt{k}$ compared to separately running the channels when the noise is small ($q \approx 1$).

\section{Concluding remarks}

In this paper, we extend the time-energy measure proposed by Chau~\cite{Chau2011} to general quantum processes.
This measure is a good indicator of the time-energy tradeoff of a quantum process.
Essentially, a large time-energy value suggests that the quantum process takes a longer time or more energy to run.
Here, we prove lower- and upper-bounds for the sum time-energy and max time-energy.
We also prove the optimal max time-energy for a class of channels which includes the depolarizing channel.
A consequence of this result is that erasing information takes more time-energy resource in the quantum setting than in the classical setting.

A related concept about erasure and energy is the Landauer's principle~\cite{Landauer1961} which puts lower limits on the energy dissipated to the environment in erasing (qu)bits.
There is a difference between the erasure considered here and the erasure of the Landauer's principle.
First, we erase information by making the initial pure state more mixed, whereas
the Landauer's principle concerns resetting a possibly mixed state to a standard pure state.
Second, the Landauer's principle concerns erasure in the thermodynamic setting where temperature plays a key role.
Third, tradeoff between time and energy is implicated in our approach.

For future investigation, it is instructive to obtain the time-energy for various quantum processes such as some standard gates or algorithms, to consider this time-energy measure in the thermodynamic setting, and to explore deeper operational meaning about this measure.

\section*{Acknowledgments}%
We thank H.-K. Lo and X. Ma for enlightening discussion.
This work is supported in part by
RGC under Grant 700712P of the HKSAR Government.

\appendix

\section{Derivation of the matrix $\tilde{U}$}
\label{app-elements-of-U}

Here, we derive Eq.~\eqref{eqn-U-with-actual-elements}.
We are given that the eigenvalues of 
$\tilde{U}$ are $\exp(i \theta_1)$ and $\exp(i \theta_2)$.
The constraints are that
(i) the chord connecting $\exp(i \theta_1)$ and $\exp(i \theta_2)$ intersects $\braket{a}{b}$, and (ii) $\tilde{U}\ket{a}=\ket{b}$.

$\tilde{U}$ has the following decomposition:
\begin{align}
\label{eqn-U-elements-decomposition}
\tilde{U}
=
\exp(i \theta_1) \ket{\tilde{u}_1}\bra{\tilde{u}_1} +
\exp(i \theta_2) \ket{\tilde{u}_2}\bra{\tilde{u}_2}
\end{align}
where $\ket{\tilde{u}_j}$ is the eigenvector corresponding to $\exp(i \theta_j)$.
We assume that $\ket{\tilde{u}_j}, j=1,2$ take the following forms:
\begin{align}
\tilde{u}_1=
\begin{bmatrix}
\sqrt{z}\\
e^{i x} \sqrt{1-z}
\end{bmatrix}, \:\:
\tilde{u}_2=
\begin{bmatrix}
-e^{-i x} \sqrt{1-z}\\
\sqrt{z}
\end{bmatrix},
\end{align}
which are expressed in the basis $\{\ket{a},\ket{a^\perp}\}$.
Note that $\braket{\tilde{u}_1}{\tilde{u}_2}=0$.
Constraint (ii) implies that
\begin{align*}
\ket{b}=
\tilde{U}\ket{a} = 
&
\Big[
z e^{i \theta_1}+ (1-z) e^{i \theta_2}
\Big]
\ket{a}
+
\\
&
\left[
\sqrt{z(1-z)} e^{i x} (e^{i \theta_1}-e^{i \theta_2})
\right]
\ket{a^\perp} .
\end{align*}
Comparing this with 
\begin{align*}
\ket{b}=
\braket{a}{b} \ket{a}
+
\sqrt{1-\lvert \braket{a}{b} \rvert^2} \ket{a^\perp}
\end{align*}
obtained from Eq.~\eqref{eqn-a-perp},
we require that
\begin{align}
\label{eqn-U-elements-11entry}
r e^{i \gamma}
&=
z e^{i \theta_1}+ (1-z) e^{i \theta_2} 
\\
\label{eqn-U-elements-21entry}
\sqrt{1-r^2}
&=
\sqrt{z(1-z)} e^{i x} (e^{i \theta_1}-e^{i \theta_2}) 
\end{align}
where $r e^{i \gamma}=\braket{a}{b}$ expressed in the polar form.
We need to verify that both equations hold for some $0\leq z \leq 1$ and $x \in \IR$.
Constraint (i) means that there exists some $z$ that Eq.~\eqref{eqn-U-elements-11entry} holds.
We take this $z$ as fixed and find $x$ so that Eq.~\eqref{eqn-U-elements-21entry} holds, which
can only occur if 
$e^{i x} (e^{i \theta_1}-e^{i \theta_2}) \in \IR_+$.
We make an ansatz for $x$ by setting
\begin{align}
e^{i x}=i (-1)^s e^{-i\frac{\theta_1+\theta_2}{2}}
\end{align}
giving
\begin{align}
e^{i x} (e^{i \theta_1}-e^{i \theta_2}) 
=
2 (-1)^{s+1} \sin \frac{\theta_1-\theta_2}{2}
\label{eqn-U-elements-check0}
\end{align}
where $s=0,1$ is chosen so that this term is non-negative.
We square both sides of Eq.~\eqref{eqn-U-elements-21entry} and compare both sides.
For the LHS, $r^2$ can be obtained from Eq.~\eqref{eqn-U-elements-11entry} as follows:
\begin{align}
r^2&=|z e^{i \theta_1}+ (1-z) e^{i \theta_2}|^2
\nonumber
\\
&=
z^2+(1-z)^2+2z(1-z)\cos(\theta_1-\theta_2) .
\label{eqn-U-elements-check1}
\end{align}
Squaring the RHS of Eq.~\eqref{eqn-U-elements-21entry} gives
\begin{eqnarray*}
&&4z(1-z)\sin^2 \frac{\theta_1-\theta_2}{2}
\\
&=&
2z(1-z)[1-\cos(\theta_1-\theta_2)]
\end{eqnarray*}
which can be checked to be equal to $1-r^2$ using Eq.~\eqref{eqn-U-elements-check1}.
Therefore, Eqs.~\eqref{eqn-U-elements-11entry} and \eqref{eqn-U-elements-21entry} hold.
Finally, Eqs.~\eqref{eqn-U-elements-decomposition}--\eqref{eqn-U-elements-check0} together give Eq.~\eqref{eqn-U-with-actual-elements}.

\bibliographystyle{apsrev4-1}

\bibliography{paperdb}

\begin{thebibliography}{20}%
\makeatletter
\providecommand \@ifxundefined [1]{%
 \@ifx{#1\undefined}
}%
\providecommand \@ifnum [1]{%
 \ifnum #1\expandafter \@firstoftwo
 \else \expandafter \@secondoftwo
 \fi
}%
\providecommand \@ifx [1]{%
 \ifx #1\expandafter \@firstoftwo
 \else \expandafter \@secondoftwo
 \fi
}%
\providecommand \natexlab [1]{#1}%
\providecommand \enquote  [1]{``#1''}%
\providecommand \bibnamefont  [1]{#1}%
\providecommand \bibfnamefont [1]{#1}%
\providecommand \citenamefont [1]{#1}%
\providecommand \href@noop [0]{\@secondoftwo}%
\providecommand \href [0]{\begingroup \@sanitize@url \@href}%
\providecommand \@href[1]{\@@startlink{#1}\@@href}%
\providecommand \@@href[1]{\endgroup#1\@@endlink}%
\providecommand \@sanitize@url [0]{\catcode `\\12\catcode `\$12\catcode
  `\&12\catcode `\#12\catcode `\^12\catcode `\_12\catcode `\%12\relax}%
\providecommand \@@startlink[1]{}%
\providecommand \@@endlink[0]{}%
\providecommand \url  [0]{\begingroup\@sanitize@url \@url }%
\providecommand \@url [1]{\endgroup\@href {#1}{\urlprefix }}%
\providecommand \urlprefix  [0]{URL }%
\providecommand \Eprint [0]{\href }%
\providecommand \doibase [0]{http://dx.doi.org/}%
\providecommand \selectlanguage [0]{\@gobble}%
\providecommand \bibinfo  [0]{\@secondoftwo}%
\providecommand \bibfield  [0]{\@secondoftwo}%
\providecommand \translation [1]{[#1]}%
\providecommand \BibitemOpen [0]{}%
\providecommand \bibitemStop [0]{}%
\providecommand \bibitemNoStop [0]{.\EOS\space}%
\providecommand \EOS [0]{\spacefactor3000\relax}%
\providecommand \BibitemShut  [1]{\csname bibitem#1\endcsname}%
\let\auto@bib@innerbib\@empty
\bibitem [{\citenamefont {Lloyd}(2000)}]{Lloyd2000}%
  \BibitemOpen
  \bibfield  {author} {\bibinfo {author} {\bibfnamefont {S.}~\bibnamefont
  {Lloyd}},\ }\href {\doibase 10.1038/35023282} {\bibfield  {journal} {\bibinfo
   {journal} {Nature}\ }\textbf {\bibinfo {volume} {406}},\ \bibinfo {pages}
  {1047} (\bibinfo {year} {2000})}\BibitemShut {NoStop}%
\bibitem [{\citenamefont {Mandelstam}\ and\ \citenamefont
  {Tamm}(1945)}]{Mandelstam1945}%
  \BibitemOpen
  \bibfield  {author} {\bibinfo {author} {\bibfnamefont {L.}~\bibnamefont
  {Mandelstam}}\ and\ \bibinfo {author} {\bibfnamefont {I.}~\bibnamefont
  {Tamm}},\ }\href@noop {} {\bibfield  {journal} {\bibinfo  {journal} {J. Phys.
  USSR}\ }\textbf {\bibinfo {volume} {9}},\ \bibinfo {pages} {249} (\bibinfo
  {year} {1945})}\BibitemShut {NoStop}%
\bibitem [{\citenamefont {Bhattacharyya}(1983)}]{Bhattacharyya1983}%
  \BibitemOpen
  \bibfield  {author} {\bibinfo {author} {\bibfnamefont {K.}~\bibnamefont
  {Bhattacharyya}},\ }\href {http://stacks.iop.org/0305-4470/16/i=13/a=021}
  {\bibfield  {journal} {\bibinfo  {journal} {J. Phys. A}\ }\textbf {\bibinfo
  {volume} {16}},\ \bibinfo {pages} {2993} (\bibinfo {year}
  {1983})}\BibitemShut {NoStop}%
\bibitem [{\citenamefont {Anandan}\ and\ \citenamefont
  {Aharonov}(1990)}]{Anandan1990}%
  \BibitemOpen
  \bibfield  {author} {\bibinfo {author} {\bibfnamefont {J.}~\bibnamefont
  {Anandan}}\ and\ \bibinfo {author} {\bibfnamefont {Y.}~\bibnamefont
  {Aharonov}},\ }\href {\doibase 10.1103/PhysRevLett.65.1697} {\bibfield
  {journal} {\bibinfo  {journal} {Phys. Rev. Lett.}\ }\textbf {\bibinfo
  {volume} {65}},\ \bibinfo {pages} {1697} (\bibinfo {year}
  {1990})}\BibitemShut {NoStop}%
\bibitem [{\citenamefont {Uhlmann}(1992)}]{Uhlmann1992}%
  \BibitemOpen
  \bibfield  {author} {\bibinfo {author} {\bibfnamefont {A.}~\bibnamefont
  {Uhlmann}},\ }\href {\doibase 10.1016/0375-9601(92)90555-Z} {\bibfield
  {journal} {\bibinfo  {journal} {Phys. Lett. A}\ }\textbf {\bibinfo {volume}
  {161}},\ \bibinfo {pages} {329} (\bibinfo {year} {1992})}\BibitemShut
  {NoStop}%
\bibitem [{\citenamefont {Vaidman}(1992)}]{Vaidman1992}%
  \BibitemOpen
  \bibfield  {author} {\bibinfo {author} {\bibfnamefont {L.}~\bibnamefont
  {Vaidman}},\ }\href {\doibase 10.1119/1.16940} {\bibfield  {journal}
  {\bibinfo  {journal} {Am. J. Phys.}\ }\textbf {\bibinfo {volume} {60}},\
  \bibinfo {pages} {182} (\bibinfo {year} {1992})}\BibitemShut {NoStop}%
\bibitem [{\citenamefont {Pfeifer}(1993)}]{Pfeifer1993}%
  \BibitemOpen
  \bibfield  {author} {\bibinfo {author} {\bibfnamefont {P.}~\bibnamefont
  {Pfeifer}},\ }\href {\doibase 10.1103/PhysRevLett.70.3365} {\bibfield
  {journal} {\bibinfo  {journal} {Phys. Rev. Lett.}\ }\textbf {\bibinfo
  {volume} {70}},\ \bibinfo {pages} {3365} (\bibinfo {year}
  {1993})}\BibitemShut {NoStop}%
\bibitem [{\citenamefont {Margolus}\ and\ \citenamefont
  {Levitin}(1996)}]{Margolus1996}%
  \BibitemOpen
  \bibfield  {author} {\bibinfo {author} {\bibfnamefont {N.}~\bibnamefont
  {Margolus}}\ and\ \bibinfo {author} {\bibfnamefont {L.~B.}\ \bibnamefont
  {Levitin}},\ }in\ \href@noop {} {\emph {\bibinfo {booktitle} {Proceedings of
  Fourth Workshop on Physics and Computation}}},\ \bibinfo {editor} {edited by\
  \bibinfo {editor} {\bibfnamefont {T.}~\bibnamefont {Toffoli}}, \bibinfo
  {editor} {\bibfnamefont {M.}~\bibnamefont {Biafore}}, \ and\ \bibinfo
  {editor} {\bibfnamefont {J.}~\bibnamefont {Le{\~{a}}o}}}\ (\bibinfo
  {publisher} {New England Complex Systems Institute, Boston},\ \bibinfo {year}
  {1996})\ pp.\ \bibinfo {pages} {208--211}\BibitemShut {NoStop}%
\bibitem [{\citenamefont {Margolus}\ and\ \citenamefont
  {Levitin}(1998)}]{Margolus1998}%
  \BibitemOpen
  \bibfield  {author} {\bibinfo {author} {\bibfnamefont {N.}~\bibnamefont
  {Margolus}}\ and\ \bibinfo {author} {\bibfnamefont {L.~B.}\ \bibnamefont
  {Levitin}},\ }\href {\doibase 10.1016/S0167-2789(98)00054-2} {\bibfield
  {journal} {\bibinfo  {journal} {Physica D}\ }\textbf {\bibinfo {volume}
  {120}},\ \bibinfo {pages} {188} (\bibinfo {year} {1998})}\BibitemShut
  {NoStop}%
\bibitem [{\citenamefont {Chau}(2010)}]{Chau2010}%
  \BibitemOpen
  \bibfield  {author} {\bibinfo {author} {\bibfnamefont {H.~F.}\ \bibnamefont
  {Chau}},\ }\href {\doibase 10.1103/PhysRevA.81.062133} {\bibfield  {journal}
  {\bibinfo  {journal} {Phys. Rev. A}\ }\textbf {\bibinfo {volume} {81}},\
  \bibinfo {pages} {062133} (\bibinfo {year} {2010})}\BibitemShut {NoStop}%
\bibitem [{\citenamefont {Giovannetti}\ \emph
  {et~al.}(2003{\natexlab{a}})\citenamefont {Giovannetti}, \citenamefont
  {Lloyd},\ and\ \citenamefont {Maccone}}]{Giovannetti2003}%
  \BibitemOpen
  \bibfield  {author} {\bibinfo {author} {\bibfnamefont {V.}~\bibnamefont
  {Giovannetti}}, \bibinfo {author} {\bibfnamefont {S.}~\bibnamefont {Lloyd}},
  \ and\ \bibinfo {author} {\bibfnamefont {L.}~\bibnamefont {Maccone}},\ }\href
  {\doibase 10.1103/PhysRevA.67.052109} {\bibfield  {journal} {\bibinfo
  {journal} {Phys. Rev. A}\ }\textbf {\bibinfo {volume} {67}},\ \bibinfo
  {pages} {052109} (\bibinfo {year} {2003}{\natexlab{a}})}\BibitemShut
  {NoStop}%
\bibitem [{\citenamefont {Giovannetti}\ \emph
  {et~al.}(2003{\natexlab{b}})\citenamefont {Giovannetti}, \citenamefont
  {Lloyd},\ and\ \citenamefont {Maccone}}]{Giovannetti2003b}%
  \BibitemOpen
  \bibfield  {author} {\bibinfo {author} {\bibfnamefont {V.}~\bibnamefont
  {Giovannetti}}, \bibinfo {author} {\bibfnamefont {S.}~\bibnamefont {Lloyd}},
  \ and\ \bibinfo {author} {\bibfnamefont {L.}~\bibnamefont {Maccone}},\ }\href
  {http://stacks.iop.org/0295-5075/62/i=5/a=615} {\bibfield  {journal}
  {\bibinfo  {journal} {Europhys. Lett.}\ }\textbf {\bibinfo {volume} {62}},\
  \bibinfo {pages} {615} (\bibinfo {year} {2003}{\natexlab{b}})}\BibitemShut
  {NoStop}%
\bibitem [{\citenamefont {Zander}\ \emph {et~al.}(2007)\citenamefont {Zander},
  \citenamefont {Plastino}, \citenamefont {Plastino},\ and\ \citenamefont
  {Casas}}]{Zander2007}%
  \BibitemOpen
  \bibfield  {author} {\bibinfo {author} {\bibfnamefont {C.}~\bibnamefont
  {Zander}}, \bibinfo {author} {\bibfnamefont {A.~R.}\ \bibnamefont
  {Plastino}}, \bibinfo {author} {\bibfnamefont {A.}~\bibnamefont {Plastino}},
  \ and\ \bibinfo {author} {\bibfnamefont {M.}~\bibnamefont {Casas}},\ }\href
  {http://stacks.iop.org/1751-8121/40/i=11/a=020} {\bibfield  {journal}
  {\bibinfo  {journal} {J. Phys. A}\ }\textbf {\bibinfo {volume} {40}},\
  \bibinfo {pages} {2861} (\bibinfo {year} {2007})}\BibitemShut {NoStop}%
\bibitem [{\citenamefont {Taddei}\ \emph {et~al.}(2013)\citenamefont {Taddei},
  \citenamefont {Escher}, \citenamefont {Davidovich},\ and\ \citenamefont
  {de~Matos~Filho}}]{Taddei2013}%
  \BibitemOpen
  \bibfield  {author} {\bibinfo {author} {\bibfnamefont {M.~M.}\ \bibnamefont
  {Taddei}}, \bibinfo {author} {\bibfnamefont {B.~M.}\ \bibnamefont {Escher}},
  \bibinfo {author} {\bibfnamefont {L.}~\bibnamefont {Davidovich}}, \ and\
  \bibinfo {author} {\bibfnamefont {R.~L.}\ \bibnamefont {de~Matos~Filho}},\
  }\href {\doibase 10.1103/PhysRevLett.110.050402} {\bibfield  {journal}
  {\bibinfo  {journal} {Phys. Rev. Lett.}\ }\textbf {\bibinfo {volume} {110}},\
  \bibinfo {pages} {050402} (\bibinfo {year} {2013})}\BibitemShut {NoStop}%
\bibitem [{\citenamefont {del Campo}\ \emph {et~al.}(2013)\citenamefont {del
  Campo}, \citenamefont {Egusquiza}, \citenamefont {Plenio},\ and\
  \citenamefont {Huelga}}]{delCampo2013}%
  \BibitemOpen
  \bibfield  {author} {\bibinfo {author} {\bibfnamefont {A.}~\bibnamefont {del
  Campo}}, \bibinfo {author} {\bibfnamefont {I.~L.}\ \bibnamefont {Egusquiza}},
  \bibinfo {author} {\bibfnamefont {M.~B.}\ \bibnamefont {Plenio}}, \ and\
  \bibinfo {author} {\bibfnamefont {S.~F.}\ \bibnamefont {Huelga}},\ }\href
  {\doibase 10.1103/PhysRevLett.110.050403} {\bibfield  {journal} {\bibinfo
  {journal} {Phys. Rev. Lett.}\ }\textbf {\bibinfo {volume} {110}},\ \bibinfo
  {pages} {050403} (\bibinfo {year} {2013})}\BibitemShut {NoStop}%
\bibitem [{\citenamefont {Chau}(2011)}]{Chau2011}%
  \BibitemOpen
  \bibfield  {author} {\bibinfo {author} {\bibfnamefont {H.~F.}\ \bibnamefont
  {Chau}},\ }\href@noop {} {\bibfield  {journal} {\bibinfo  {journal} {Quant.
  Inf. Compu.}\ }\textbf {\bibinfo {volume} {11}},\ \bibinfo {pages} {0721}
  (\bibinfo {year} {2011})}\BibitemShut {NoStop}%
\bibitem [{Note1()}]{Note1}%
  \BibitemOpen
  \bibinfo {note} {We adopt the fidelity definition $F(\rho ,\sigma )={\setbox
  \z@ \hbox {\frozen@everymath \@emptytoks \mathsurround \z@
  $\nulldelimiterspace \z@ \left (\vcenter to\@ne \big@size {}\right .$}\box
  \z@ }{\protect \rm Tr}\protect \sqrt {\rho ^{1/2} \sigma \rho ^{1/2}}{\setbox
  \z@ \hbox {\frozen@everymath \@emptytoks \mathsurround \z@
  $\nulldelimiterspace \z@ \left )\vcenter to\@ne \big@size {}\right .$}\box
  \z@ }^2$ for two quantum states $\rho $ and $\sigma $.}\BibitemShut {Stop}%
\bibitem [{\citenamefont {Nielsen}\ and\ \citenamefont
  {Chuang}(2000)}]{Nielsen2000}%
  \BibitemOpen
  \bibfield  {author} {\bibinfo {author} {\bibfnamefont {M.~A.}\ \bibnamefont
  {Nielsen}}\ and\ \bibinfo {author} {\bibfnamefont {I.~L.}\ \bibnamefont
  {Chuang}},\ }\href@noop {} {\emph {\bibinfo {title} {Quantum Computation and
  Quantum Information}}}\ (\bibinfo  {publisher} {CUP},\ \bibinfo {year}
  {2000})\BibitemShut {NoStop}%
\bibitem [{\citenamefont {King}(2003)}]{King2003}%
  \BibitemOpen
  \bibfield  {author} {\bibinfo {author} {\bibfnamefont {C.}~\bibnamefont
  {King}},\ }\href {\doibase 10.1109/TIT.2002.806153} {\bibfield  {journal}
  {\bibinfo  {journal} {IEEE Trans. Inf. Theory}\ }\textbf {\bibinfo {volume}
  {49}},\ \bibinfo {pages} {221} (\bibinfo {year} {2003})}\BibitemShut
  {NoStop}%
\bibitem [{\citenamefont {Landauer}(1961)}]{Landauer1961}%
  \BibitemOpen
  \bibfield  {author} {\bibinfo {author} {\bibfnamefont {R.}~\bibnamefont
  {Landauer}},\ }\href {\doibase 10.1147/rd.53.0183} {\bibfield  {journal}
  {\bibinfo  {journal} {IBM J. Res. Dev.}\ }\textbf {\bibinfo {volume} {5}},\
  \bibinfo {pages} {183} (\bibinfo {year} {1961})}\BibitemShut {NoStop}%
\end{thebibliography}%

\end{document}